\newcommand{\DDD}{\mathcal{D}}
\newcommand{\al}{\alpha}
\newcommand{\eps}{\varepsilon}
\newcommand{\e}{\eps}
\newcommand{\f}{\frac}
\newcommand{\txtBkSketch}{\texttt{BkSketch}}
\newcommand{\txtSCest}{\texttt{StdEst}}
\newcommand{\txtRBCest}{\texttt{RobustEst}}
\newcommand{\txtRCest}{\texttt{TRobustEst}}
\title{\texorpdfstring{Breaking the Quadratic Barrier:\\ Robust Cardinality Sketches for Adaptive Queries}{Breaking the Quadratic Barrier: Robust Cardinality Sketches for Adaptive Queries}}
\author{Edith Cohen\\
Google Research and Tel Aviv University\\
Mountain View, CA, USA\\
\texttt{edith@cohenwang.com}
\and
Mihir Singhal\\
UC Berkeley and Google Research\\
Berkeley, CA, USA\\
\texttt{mihir.a.singhal@gmail.com}
\and
Uri Stemmer\\
Tel Aviv University and Google Research\\
Tel Aviv-Yafo, Israel\\
\texttt{u@uri.co.il}
}
\date{}
\begin{document}

\ignore{
\twocolumn[
\icmltitle{\texorpdfstring{Breaking the Quadratic Barrier:\\ Robust Cardinality Sketches for Adaptive Queries}{Breaking the Quadratic Barrier: Robust Cardinality Sketches for Adaptive Queries}}

% It is OKAY to include author information, even for blind
% submissions: the style file will automatically remove it for you
% unless you've provided the [accepted] option to the icml2025
% package.

% List of affiliations: The first argument should be a (short)
% identifier you will use later to specify author affiliations
% Academic affiliations should list Department, University, City, Region, Country
% Industry affiliations should list Company, City, Region, Country

% You can specify symbols, otherwise they are numbered in order.
% Ideally, you should not use this facility. Affiliations will be numbered
% in order of appearance and this is the preferred way.
% \icmlsetsymbol{equal}{*}

\begin{icmlauthorlist}
\icmlauthor{Edith Cohen}{google,tau}
\icmlauthor{Mihir Singhal}{berkeley,google}
\icmlauthor{Uri Stemmer}{tau,google}
\end{icmlauthorlist}

\icmlaffiliation{tau}{School of Computer Science, Tel Aviv University, Israel}
\icmlaffiliation{google}{Google Research}
\icmlaffiliation{berkeley}{School of Computer Science, UC Berkeley, Berkeley, CA, USA}

\icmlcorrespondingauthor{Edith Cohen}{edith@cohenwang.com}
\icmlcorrespondingauthor{Mihir Singhal}{mihir.a.singhal@gmail.com}
\icmlcorrespondingauthor{Uri Stemmer}{u@uri.co.il}

% You may provide any keywords that you
% find helpful for describing your paper; these are used to populate
% the "keywords" metadata in the PDF but will not be shown in the document
\icmlkeywords{Adaptive Inputs, Cardinality Sketches, Robustness}
\vskip 0.3in
]

% this must go after the closing bracket ] following \twocolumn[ ...

% This command actually creates the footnote in the first column
% listing the affiliations and the copyright notice.
% The command takes one argument, which is text to display at the start of the footnote.
% The \icmlEqualContribution command is standard text for equal contribution.
% Remove it (just {}) if you do not need this facility.

%\printAffiliationsAndNotice{}  % leave blank if no need to mention equal contribution
% \printAffiliationsAndNotice{\icmlEqualContribution} % otherwise use the standard text.

\printAffiliationsAndNotice{} 

} %ignore ICML

\maketitle 
\begin{abstract}
%Cardinality sketches are compact data structures that significantly reduce storage, communication, and computational overhead while providing accurate approximations of distinct counts. The designs are randomized, with a sketching map sampled from a distribution and reused across multiple queries. This necessary randomness, however, introduces vulnerabilities: for non-adaptive queries (independent of prior responses), cardinality sketches provide statistical guarantees for $t$ queries, where $t$ scales exponentially with the sketch size $k$. However, under adaptive queries, these guarantees degrade to $t = \tilde{O}(k^2)$.
Cardinality sketches are compact data structures that efficiently estimate the number of distinct elements across multiple queries while minimizing storage, communication, and computational costs. However, recent research has shown that these sketches can fail under {\em adaptively chosen queries}, breaking down after approximately $\tilde{O}(k^2)$ queries, where $k$ is the sketch size.

In this work, we overcome this \emph{quadratic barrier} by designing robust estimators with fine-grained guarantees. Specifically, our constructions can handle an {\em exponential number of adaptive queries}, provided that each element participates in at most $\tilde{O}(k^2)$ queries. This effectively shifts the quadratic barrier from the total number of queries to the number of queries {\em sharing the same element}, which can be significantly smaller. Beyond cardinality sketches, our approach expands the toolkit for robust algorithm design.

%In this work, we overcome this \emph{quadratic barrier} by designing robust estimators with fine-grained guarantees. Specifically, when each key participates in at most $r = \tilde{O}(k^2)$ queries, our approach can handle an unlimited number of adaptive queries, effectively shifting the quadratic barrier on $t$ to the potentially much smaller $r$. Furthermore, we show that this result holds even when a fraction of the keys are \emph{heavy}, as is commonly observed in Pareto distributions, and demonstrate the potential for substantial practical gains. Beyond cardinality sketches, our approach broadens the toolkit for robust algorithm design.
% beyond generic methods constrained by $t = \tilde{O}(k^2)$.

% Additionally, our results reveal fundamental differences in robustness across statistics: for $\ell_2$ sketches, existing attacks remain effective with $t = \tilde{O}(k^2)$, even when $r = 1$.

\ignore{
===========
Composable sketch designs for fundamental aggregates, including cardinality (distinct count), randomly sample a sketching map and use it across queries. Randomness is known to be necessary but is vulnerable to adaptive queries. When inputs do not depend on the sketching map, the provided statistical guarantees are for answering correctly a number of queries $t$ that is exponential in the sketch size $k$. The guarantees with adaptive queries, however, are much weaker: Wrapper methods applied to the basic designs guarantee only $t = \tilde{O}(k^2)$ queries [Hassidim et al based on ADA]. Unfortunately, this quadratic bound was shown to be tight for cardinality sketching. In this work, we aim to mitigate this bad news with fine grained guarantees that apply with a large  number of queries $t$ when most keys  participate in a limited $r \ll t$ queries. 
For cardinality sketching, through a novel analysis method, we show that we can process an unlimited number of adaptive queries (with per-query guarantees similar to a non-adaptive setting) as long as $r=\tilde{O}(k^2)$, shifting the quadratic barrier to $r$ instead of $t$. 
Importantly, we extend the ADA toolkit for robust algorithm design to beyond the wrapper methods, which only guarantees $t = \tilde{O}(k^2)$ queries even in this case.
In contrast, for $\ell_2$ sketches, the known attacks apply with $t=\tilde{O}(k^2)$ even with $r=1$. }
\end{abstract}

\section{Introduction}

% \eccomment{$r$: per-key limit.  $t$: total number of queries. $n$: ground set size. $\rho$: randomness}
% \eccomment{To do:  Look into generalization with the binning statistics and $k$-partition}

When dealing with massive datasets, compact summary structures (known as sketches) allow us to drastically reduce storage, communication, and computation while still providing useful approximate answers.

Cardinality sketches are specifically designed to estimate the number of distinct elements in a query set~\cite{FlajoletMartin85,hyperloglog:2007,ECohen6f,ams99,BJKST:random02,KaneNW10,ECohenADS:TKDE2015,Blasiok20}. 
For a ground set $[n]$ of keys, a sketch is defined by a \emph{sketching map} $S$ that maps subsets $V\subset [n]$ to their sketches $S(V)$, and an \emph{estimator} that processes the sketch $S(U)$ and returns an approximation of the cardinality $|U|$.
An important property of sketching maps is 
\emph{composability}: The sketch $S(U\cup V)$ of the union of two sets $U$ and $V$ can be computed directly from the sketches $S(U)$ and $S(V)$. Composability is crucial for most applications, particularly in distributed systems where data is stored and processed across multiple locations.

Cardinality sketches are extensively used in practice. \emph{MinHash sketches} are composable sketches based on hash mappings of keys to priorities, where the sketch of a set is determined by the minimum priorities of its elements \citep{FlajoletMartin85,hyperloglog:2007,ECohen6f,Broder:CPM00,Rosen1997a,ECohen6f,BRODER:sequences97,BJKST:random02}\footnote{For a survey see \cite{MinHash:Enc2008,Cohen:PODS2023}.}. Many practical implementations\footnote{See, e.g., \citep{datasketches,bigquerydocs}.} use MinHash sketches of various types, particularly bottom-$k$ and HyperLogLog sketches \cite{hyperloglog:2007,hyperloglogpractice:EDBT2013}. 

These sketches can answer an exponential number of queries (in the sketch size $k$) with a small relative error.
The design samples a sketching map from a distribution, and to ensure composability, \emph{the same map must be used to sketch all queries}. The guarantees are statistical: for \emph{any} sequence of queries, with \emph{high probability over the sampling of the map}.
The guarantees hold provided that the queries 
% are {\em fixed in advance}
{\em do not depend on the sampled sketching map}, which is known as the {\em non-adaptive setting}.

% Linear sketches~\citet{CormodeDIM03,distinct_deletions_Ganguly:2007,KaneNW10} apply with vectors $\boldsymbol{v}$ and we want to approximate the $\|\boldsymbol{v} \|_0$ (the number of nonzero entries).

\subsection{The adaptive setting}
In the adaptive setting, we assume that the sequence of queries may be chosen adaptively based on previous interactions with the sketch. This arises naturally when a feedback loop causes queries to depend on prior outputs. Sketching algorithms that guarantee utility in this setting are said to be {\em robust} to adaptive inputs.

The main challenge in the adaptive setting (compared to the non-adaptive setting) is that the queries become {\em correlated with the internal randomness of the sketch}. This would not be an issue if the sketching algorithm were deterministic, but unfortunately, randomness is necessary. In particular, any sublinear composable cardinality sketch that is statistically guaranteed to be accurate on all inputs must be randomized \citep{KaneNW10}. By this, we mean that the sketching map cannot be predetermined and must instead be sampled from a distribution. 
% When the inputs are adaptive, it may be possible to use fewer queries in order to zoom on inputs that are bad for the sketching map.

% for each particular query set $U$, the probability that it is 
% not estimated well using a sampled sketching map is exponentially small in the the sketch size. But each sketching map must fail on some inputs. When the inputs are adaptive, it may be possible to more quickly zoom on inputs that are bad for the sketching map.

\citet{HassidimKMMS20} presented a {\em generic robustness wrapper} that transforms a non-robust randomized sketch into a more robust one. Informally, this wrapper uses {\em differential privacy} \citep{dwork2006calibrating} to obscure the internal randomness of the sketching algorithm, effectively breaking correlations between the queries and the internal randomness. 

In more detail, to support $t$ adaptive queries, the wrapper of \citet{HassidimKMMS20} maintains approximately 
$\sqrt{t}$ independent copies of the non-robust sketch and answers each query by querying all (or some) of these sketches and aggregating their responses. 
As \citet{HassidimKMMS20} showed, this results in a more robust (and composable) sketch, that can support $t$ queries in total at the cost of increasing the space complexity by a factor of  
$\approx\sqrt{t}$.
Instantiating this wrapper with classical (non-robust) cardinality sketches results in a sketch for cardinality estimation that uses space $k\approx\sqrt{t}/\alpha^2$, where $\alpha$ is the accuracy parameter.

\textbf{Lower bounds.} 
Lower bounds on robustness are established by designing {\em attacks} in the form of adaptive sequences of queries. The objective of an attack is to force the algorithm to fail. An attack is more efficient if it causes the algorithm to fail using a smaller number of adaptive queries. We refer to number of queries in the attack as the {\em size} of the attack, which is typically a function of the sketch size $k$. Some attacks are {\em tailored} to a particular estimator, while others are {\em universal} in the sense that they apply to {\em any} estimator.

For cardinality sketches,
\citet{DBLP:journals/icl/ReviriegoT20} and~\citet{cryptoeprint:2021/1139} constructed $\tilde{O}(k)$-size attacks on the popular HLL sketch with its standard estimator. %\footnote{See \citep{hyperloglog:2007,hyperloglogpractice:EDBT2013}.},
\citet{AhmadianCohen:ICML2024} constructed 
$\tilde{O}(k)$-size attacks for popular MinHash sketching maps with their standard estimators, as well as a $\tilde{O}(k^2)$-size universal attacks. \citet{GribelyukLWYZ:FOCS2024} presented polynomial-size universal attacks on all linear sketching maps for cardinality estimation. Finally, \citet{CNSSS:ArXiv2024} presented optimal $\tilde{O}(k^2)$-size universal attacks on essentially all composable and linear sketching maps\footnotemark.

To summarize, there are matching upper and lower bounds of $t=\tilde{\Theta}(k^2)$ on the number of adaptive cardinality queries that can be approximated using a sketch of size $k$. The upper bound is obtained from the generic wrapper of \citet{HassidimKMMS20}, while the lower bound arises from the universal attack of \citet{CNSSS:ArXiv2024}. %against all composable and linear sketching maps for cardinality estimation. 
We refer to this limitation as the \emph{quadratic barrier}.

\subsection{Per-key participation}

Nevertheless, we can still hope for stronger {\em data-dependent} guarantees, and since cardinality sketches are widely used in practice, achieving this under realistic conditions is important. Specifically, we seek common properties of input queries %and a compatible sketch and estimator design 
such that, if these properties hold, we can guarantee accurate processing of $t \gg k^2$ adaptive queries.

We consider a parameter $r$ that is the \emph{per-key participation} in queries. Current attack constructions are such that most keys are involved in a large number of the queries and therefore $r\approx t$. However, in many realistic scenarios, the majority of keys participate in only a small number of queries and $r \ll t$. 
This pattern emerges when the distribution over the key domain shifts over time. For instance, the popularity of watched videos or browsed webpages can change over time, leading to a changing  set of access frequencies of keys.  Additionally, even when the query distribution is fixed, this pattern is consistent with Pareto-distributed frequencies, where a small fraction of keys (the ``heavy hitters'') appear in most queries, while most keys appear in only a limited number of queries. We therefore pose the following question:

\footnotetext{Here, ``essentially all'' means that the maps must satisfy certain basic reasonability conditions.}

\begin{ques} \label{main:problem}
Can we shift the quadratic barrier from the total number of queries $t$ to the typically much smaller parameter $r$, that is, can we design a robust (and composable) sketch of size $k\approx\sqrt{r}$ instead of $k\approx\sqrt{t}$? 
\end{ques}

\SetKwFunction{BkSketch}{BkSketch}
\SetKwFunction{SCest}{StdEst}
\SetKwFunction{RBCest}{RobustEst}
\SetKwFunction{RCest}{TRobustEst}

\subsection{Results Overview}
We provide an affirmative answer to \cref{main:problem}. Specifically, we design a sketch and estimator capable of handling an exponential number of adaptive queries provided that each key participates in at most $r=\tilde{O}(k^2)$
queries. We further provide an extension which maintain the guarantee even if this condition fails for a small fraction of the keys in each query. %Furthermore, our design gracefully degrades in approximation quality when a fraction of the keys in the sketch exceeds $r$ queries.

\textbf{Reformulating the robustness wrapper (\cref{framework:sec}).} As we mentioned, the generic wrapper of \citet{HassidimKMMS20} transforms non-robust sketches into more robust ones by obscuring their internal randomness using differential privacy. This effectively ``reduces'' the problem of designing a robust sketch to that of designing a suitable differentially private aggregation procedure. Our first contribution is to reformulate this wrapper so that the reduction is not to differential privacy, but rather to the problem of {\em adaptive data analysis (ADA)}.

In the ADA problem, we get an input dataset $S$ sampled from some unknown distribution $\DDD$, and then need to answer a sequence of {\em adaptively chosen statistical queries (SQ)} w.r.t.\ $\DDD$. This problem was introduces by \citet{DworkFHPRR15} who showed that it is possible to answer $\approx |S|^2$ statistical queries efficiently. The application of differential privacy as a tool for the ADA problem predated its application for robust data structures.

Our reformulated wrapper has two benefits: (1) It allows us to augment the generic wrapper with the granularity needed to address \cref{main:problem}, whereas the existing wrapper lacks this flexibility. (2) Even though our construction ultimately solves the ADA problem using differential privacy, other known solutions to the ADA problem now exist, and it is conceivable that future applications might need to leverage properties of these alternative solutions.

%Our first observation is that the existing robustness wrapper, which treats each sampled sketching map as a single privacy unit, lacks the granularity needed to address \cref{main:problem}. From a technical standpoint, tackling this challenge requires a fundamentally different approach to translating sketching maps into the statistical query (SQ) model, combined with a fine-grained analysis of SQ.

To this end, in \cref{framework:sec} we introduce a tool: an SQ framework with fine-grained generalization guarantees. %which combines individual privacy charging analysis with the SQ model. 
This framework addresses an analogous version of \cref{main:problem} for the ADA problem, where $k$
represents the sample size and $r$
is the maximum number of query predicates that are satisfied by a given key $x$.
To adapt this tool for sketching, we need to represent the randomness determining the sketching map as a sample from a product distribution and express the query response algorithm in terms of appropriate statistical queries. %The framework is general and can, in particular, be applied to analyze the robust version of CountSketch introduced by~\citet{CLNSSS:ICML2022}.

\textbf{Robust estimators for bottom-${\boldsymbol{k}}$ sketch.}
We use our fine-grained SQ framework in order to design a robust version for the popular bottom-$k$ MinHash cardinality sketch
\cite{Rosen1997a,ECohen6f,BRODER:sequences97,BJKST:random02}.
The randomness in the bottom-$k$ sketch corresponds to a map from keys to i.i.d.\ random priorities. The sketch $\txtBkSketch(V)$ of a subset $V$ includes  the $k$ keys with lowest priorities and their priority values. 
The standard cardinality estimator for this sketch returns a function of the highest priority in the sketch, which is a sufficient statistic for the cardinality. 
A sketch size of
$k=\tilde{\Omega}(\alpha^{-2})$ yields with high probability a relative error of $1\pm\alpha$.
The standard estimator, however, can be compromised using an attack with $t=\tilde{O}(k)$  queries~\cite{AhmadianCohen:ICML2024}. We present two robust estimators for the bottom-$k$ sketch that are analyzed using our fine-grained SQ framework:

\begin{itemize}[leftmargin=8.5pt,topsep=0pt,itemsep=0pt]
    \item \textbf{Basic Robust Estimator (\cref{bottomk:sec}).} We present a {\em stateless} estimator and show that for $\alpha\in (0,1)$ and $r=\tilde{\Omega}(k^2\alpha^{4})$, all estimates are accurate with high probability provided that all keys participate in no more than $r$ query sketches. In particular, since it always holds that $r\geq t$, this implies a guarantee of $t=\tilde{\Omega}(k^2\alpha^{4})$  on the number of adaptive queries.
Note that the sketch size ``budget'' of $k$ can be used to trade off accuracy and robustness.

\item \textbf{Tracking Robust Estimator (\cref{trackingest:Sec}).} 
We present another estimator that tracks the exposure of keys based on their participation in query sketches. Once a limit of $r$ is reached, the key is deactivated and is not used in future queries. 
The tracking estimator allows for smooth degradation and continues to be accurate as long as at most an $\alpha$ fraction of entries in the sketch are deactivated. 
Note that the tracking state is maintained by the query responder (server-side) and does not effect the computation or size of the sketch.
% The advantages of this approach is that it guarantees that the sketching map is not leaked and no adversary can compromise it.
% This is important because adversarial inputs can be constructed from workloads of benign queries \citep{AhmadianCohen:ICML2024} and this allow for robust long term use of the sketching map.
% This is much stronger than a guarantee in terms of the entries \emph{in the query set}. The sketch contains at most $k$ keys and the query set can be huge.
\end{itemize}

% with $k=\tilde{\Omega}(\sqrt{t}\alpha^{-2})$ for $t$ adaptive queries.

% \item [(i)] \textbf{Basic Robust Estimator} (\cref{bottomk:sec}): 
% This estimator does not maintain state from prior queries and accuracy and
%This estimator does not maintain state across queries. The statistical guarantees are for $r=\tilde{\Omega}(k^2\alpha^{-2})$, provided that all keys participate in no more than $r$ query sketches. 
% \ignore{It allows for smooth degradation in terms of the total number of keys that appeared in more than $r$ query sketches.}
% Our bottom-$k$ robust sketch is also a cleaner design compared with the super-sketch approach: We simply increase the size parameter of a standard bottom-$k$.
% is a standard sketch that can be scaled up to any size $k$ and has improved constant factors on the quadratic $t=\tilde{\Omega}(k^2)$ guarantee.

% We also show that the estimation quality declines with the number of keys in the query set that exceeded the participation limit. In particular, when the inputs do not adhere to participation limit, the responses leak information on the sketching map that may compromise it (in that an adversarial distribution can be constructed).

\textbf{Experiments.}
Finally, in \cref{experiments:sec}, we demonstrate the benefits of our fine-grained analysis using simulations on query sets sampled from uniform and Pareto distributions and observe $12 \times$ to $100 \times$ gains.

\subsection{Additional Related Work}

The adaptive setting has been studied extensively across multiple domains, including statistical queries~\citep{Freedman:1983,Ioannidis:2005,FreedmanParadox:2009,HardtUllman:FOCS2014,DworkFHPRR15,BassilyNSSSU:sicomp2021}, sketching and streaming algorithms~\citep{MironovNS:STOC2008,HardtW:STOC2013,BenEliezerJWY21,HassidimKMMS20,WoodruffZ21,AttiasCSS21,BEO21,DBLP:conf/icml/CohenLNSSS22,CNSS:AAAI2023Tricking,AhmadianCohen:ICML2024}, dynamic graph algorithms~\citep{ShiloachEven:JACM1981,AhnGM:SODA2012,gawrychowskiMW:ICALP2020,GutenbergPW:SODA2020,Wajc:STOC2020, BKMNSS22}, and machine learning~\citep{szegedy2013intriguing,goodfellow2014explaining,athalye2018synthesizing,papernot2017practical}.

\textbf{Lower bounds.}
For the ADA problem, 
\citet{HardtUllman:FOCS2014,SteinkeUllman:COLT2015} designed a quadratic-size universal attack, using Fingerprinting Codes~\citep{BonehShaw_fingerprinting:1998}. 
\citet{HardtW:STOC2013} designed a
polynomial-size universal attack on any linear sketching map for $\ell_2$ norm estimation.
\citet{DBLP:conf/nips/CherapanamjeriN20} constructed an
$\tilde{O}(k)$-size
attack on the Johnson Lindenstrauss Transform with the standard estimator.
\citet{BenEliezerJWY21} presented an
$\tilde{O}(k)$-size attack on the AMS sketch~\citep{ams99} with the standard estimator.
\citet{DBLP:conf/icml/CohenLNSSS22}
presented an
$\tilde{O}(k)$-size attack on Count-Sketch~\citep{CharikarCFC:2002} with the standard estimator.
\citet{CNSS:AAAI2023Tricking} presented  $\tilde{O}(k^2)$ size universal attack on the AMS sketch~\citep{ams99}  for $\ell_2$ norm estimation and on Count-Sketch~\cite{CharikarCFC:2002} (for  heavy hitter or inner product estimation).

\section{Preliminaries}

\subsection{DP tools: linear queries with per-unit charging} \label{prelimDP:sec}

Differential privacy~\citep{dwork2006calibrating} (DP) is a Lipschitz-like stability property of algorithms, parametrized by $(\eps,\delta)$.
Two datasets $\boldsymbol{x},\boldsymbol{x}'\in X^n$ are \emph{neighboring} if they differ in at most one entry. Two probability distributions $\Dd$ and $\Dd'$ satisfy $\Dd\approx_{\eps,\delta} \Dd'$ if and only if for any measurable set of events $E$, $\Pr_D(E) \leq e^\eps \Pr_{\Dd'}(E)+\delta$ and $\Pr_{\Dd'}(E) \leq e^\eps \Pr_{\Dd}(E)+\delta$.
A randomized algorithm $A$ is \textit{$(\eps,\delta)$-DP} if for any two neighboring inputs $\boldsymbol{x}$ and $\boldsymbol{x}'$,
$A(\boldsymbol{x})\approx_{\eps,\delta} A(\boldsymbol{x}')$. DP algorithms compose in the sense that multiple applications of a DP algorithm to the dataset are also DP (with composed parameters). % We will use the following tools in a black-box manner.

\SetKwFunction{AboveThreshold}{AboveThreshold}
\SetKwFunction{BetweenThresholds}{BetweenThresholds}

Given a dataset 
$\boldsymbol{x} := (x_1,\ldots,x_n) \in X^n$ of items from domain $X$,
a \emph{counting query} is specified by a predicate 
$f:[n]\times X\to [0,1]$ and has the form
$f(\boldsymbol{x}) := \sum_{i\in [n]} f(i,x_i)$. For $\eps>0$, an algorithm that returns a noisy count 
$\hat{f}(\boldsymbol{x}) := f(\boldsymbol{x})+\Lap[1/\eps]$, where $\Lap$ is the Laplace distribution, satisfies $(\eps,0)$-DP. When multiple such tests are performed over the same dataset, the privacy parameters compose to $(r\eps,0)$-DP or alternatively to $(\sqrt{2r\log(1/\delta)}\eps +r\eps^2,\delta)$-DP for any $\delta>0$.

The Sparse Vector Technique (SVT) 
\citep{DNRRV:STOC2009,DBLP:conf/stoc/RothR10,DBLP:conf/focs/HardtR10,DBLP:books/sp/17/Vadhan17-dp-complex}
is a privacy analysis technique for a situation when an adaptive sequence of threshold tests on counting queries is performed on the same sensitive dataset $\boldsymbol{x}$. Each test $\AboveThreshold_\eps(f,T)$ is specified by a predicate $f$ and threshold value $T$.
The result is the noisy value $\hat{f}(\boldsymbol{x})$ if $\hat{f} > T$ and is $\perp$ otherwise. The appeal of the technique is a privacy analysis that only depends on the number $r$ of queries for which the test result is positive.

We will use here an extension of (a stateless version of) SVT, described in
\cref{algo:svt-individual}, 
that improves utility for the same privacy parameters  \citep{DBLP:conf/colt/KaplanMS21,feldman2021individual,CLNSSS:ICML2022,targetcharging:arxiv2023}.  
The algorithm maintains a set $A$ of \emph{active} indices that is initialized to all of $[n]$ and maintains charge counters $(C_i)_{i\in[n]}$, initialized to $0$.
For each query $(h,T)$, the response is   $\AboveThreshold^A_\eps(h,T)$ test result that is  $\hat{f} := \sum_{i\in A} h(i,x_i)+\Lap[1/\eps]$ if $\hat{h} > T$ and is $\perp$ otherwise. Note that $\AboveThreshold^A$ evaluates the query only over active indices. For each query with a positive (above) response, the algorithm increases the charge counts on all the indices that contributed to the query, namely, $h(i,x_i) = 1$. Once $C_i=r$, index $i$  is removed from the active set $A$ of indices.

The appeal of \cref{algo:svt-individual} is a fine-grained analysis that can only result in an improvement --
the privacy bounds have the same dependence on the parameter $r$, that in the basic approach bounds the total number $t$ of tests with positive outcomes and in the fine-grained one bounds the (potentially much smaller) per-index participation in such tests: 
\begin{theorem} [\cite{targetcharging:ICML2023} Privacy of Algorithm~\ref{algo:svt-individual}]\label{SVTindividual:thm}
For any $\eps < 1$ and $\delta \in (0, 1)$,  Algorithm~\ref{algo:svt-individual} is $(O(\sqrt{r \log(1/\delta)}\eps), 2^{-\Omega(r)} + \delta)$-DP (see \cref{thm:TCprivacy} for more precise expressions).
\end{theorem}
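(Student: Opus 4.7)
The plan is to argue privacy per-individual: fix a pair of neighboring datasets $\boldsymbol{x},\boldsymbol{x}'$ differing only at some coordinate $i^*\in[n]$, and bound the total privacy loss across the entire adaptive interaction as viewed from $i^*$. The structural property of \cref{algo:svt-individual} that drives the analysis is that $C_{i^*}$ is incremented only on non-$\perp$ responses for which $h(i^*,x_{i^*})=1$, and once $C_{i^*}$ reaches $r$, index $i^*$ is removed from $A$ and cannot influence any subsequent output. Consequently, $i^*$ can affect at most $r$ non-$\perp$ events in each execution, and this is what should replace the usual dependence on the total number $t$ of above-threshold events in the standard SVT bound.

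First I would isolate the contribution of the $\perp$ responses using the classical SVT shift argument. Each threshold comparison $\hat h>T$ uses fresh Laplace noise of scale $1/\eps$ calibrated to the unit sensitivity of $h$, so the entire (arbitrarily long) subsequence of below-threshold tests can be simulated at a combined cost of only $O(\eps)$ in the privacy budget, independent of its length. Thus all the privacy accounting is concentrated on the non-$\perp$ responses.

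Next, for the non-$\perp$ responses, each released noisy count $\hat f=\sum_{i\in A} h(i,x_i)+\Lap[1/\eps]$ has sensitivity $1$ at coordinate $i^*$ and is therefore $\eps$-DP with respect to $i^*$. By the per-individual cap, at most $r$ such releases ever involve $i^*$ in its active contribution. Applying advanced composition to an adaptive sequence of at most $r$ mechanisms that are each $\eps$-DP with respect to $i^*$ yields the target bound $(O(\sqrt{r\log(1/\delta)}\eps),\delta)$-DP.

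The main obstacle is that the per-individual cap of $r$ must hold \emph{simultaneously} in the two coupled executions on $\boldsymbol{x}$ and $\boldsymbol{x}'$: the active set $A_t$ and the counters $C_i$ can diverge across the two runs, since Laplace noise may push a query above threshold in one execution but not in the other, so the naive bound ``$r$ events per execution'' is not directly the same as ``$r$ disagreeing events.'' I would address this by invoking a stopping-time / individual-privacy-filter version of advanced composition, in which the privacy-loss random variable for $i^*$ is accumulated only at events where the $i^*$-contribution actually differs between the two runs, and the process is halted the first time it would exceed the $\eps\sqrt{r\log(1/\delta)}$ budget; by the charging rule this stopping time is reached only after at least $r$ non-$\perp$ disagreement events, at which point $i^*$ is guaranteed to be deactivated in at least one of the two runs and the subsequent transcript contributes nothing further. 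The residual $2^{-\Omega(r)}$ term in $\delta$ then falls out as the concentration tail bound that the bounded-sensitivity privacy-loss martingale does not deviate by more than $O(\sqrt{r\log(1/\delta)}\eps)$ over these $r$ events, completing the proof.
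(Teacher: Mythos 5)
The paper does not prove Theorem~\ref{SVTindividual:thm} itself; it cites the target-charging technique (TCT) of Cohen and Lyu~\cite{targetcharging:ICML2023} and records that paper's precise constants in Theorem~\ref{thm:TCprivacy}. Your proposal is therefore an attempted reconstruction of an external proof rather than a competitor to an in-paper one.

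Your outline captures the right ingredients: per-index accounting, the observation that $C_{i^*}$ caps $i^*$'s participation in positive tests at $r$, and advanced composition over the events where $i^*$'s contribution differs between two coupled runs. You also correctly flag the central technical obstacle, namely that the active set $A$ and the counter $C_{i^*}$ can desynchronize between the executions on $\boldsymbol{x}$ and $\boldsymbol{x}'$.

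Your proposed resolution of that obstacle is where the argument breaks. You assert that once $r$ ``disagreement'' events have occurred, $i^*$ is deactivated in at least one of the two runs ``and the subsequent transcript contributes nothing further.'' Neither half holds. A disagreement event increments exactly one of $C_{i^*}^{\boldsymbol{x}}, C_{i^*}^{\boldsymbol{x}'}$ (whichever run has $i^*$ active with $h(i^*,\cdot)=1$), so $r$ such events split evenly leave both counters at $r/2$ with $i^*$ still active in both runs. More importantly, once $i^*$ is deactivated in one run only (say the $\boldsymbol{x}$-run), subsequent positive tests with $h(i^*, x'_{i^*})=1$ continue to produce query values that differ by one, each incurring genuine privacy loss, and this can persist for up to $r$ further events before $i^*$ is also deactivated in the $\boldsymbol{x}'$-run. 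A correct count gives on the order of $2r$ cost-incurring events, not $r$. The precise bounds of Theorem~\ref{thm:TCprivacy} make this visible: the effective budget scales as $r/q$ with $q = 1/(e^\eps + 1) \approx 1/2$, and the residual $\delta^*(r,\alpha) \le e^{-\alpha^2 r/(2(1+\alpha))}$ is a Chernoff-type concentration term arising from the TCT's probabilistic charging analysis, not a tail bound on the Laplace noise. Also note that Algorithm~\ref{algo:svt-individual} uses a stateless \AboveThreshold{} with fresh per-query noise and no shared noisy threshold, so the classical ``below-threshold events are free via noise-shifting'' argument you invoke does not apply verbatim either; making the $\perp$ responses nearly free is itself part of what the TCT coupling supplies. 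As written, your advanced-composition step applies to the wrong number of mechanisms and rests on a false stopping claim, so the proof is incomplete at exactly the point you identify as the crux.
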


\begin{algorithm2e}[htbp]
    \caption{Linear Queries with Individual Privacy Charging}
    \label{algo:svt-individual}
    \DontPrintSemicolon
\small{    
%    \LinesNumbered
    \KwIn{
        Sensitive data set $(x_1,\ldots,x_n)\in X^n$; privacy budget $r > 0$; Privacy parameter $\eps>0$.
    }
    \lForEach(\tcp*[f]{Initialize counters}){$i\in [n]$}{
        $C_i\gets 0$ 
    }
    $A\gets [n]$ \tcp*{Initialize the active set}

\textbf{Function} $\AboveThreshold^A_\eps(h,T)$ \tcp*{\AboveThreshold query} 
\Indp
\KwIn{predicate $h:[n]\times X\to \{0,1\}$ and threshold $T\in \mathbb{R}$}
$\hat{h} \gets \left(\sum_{i\in A}h(i,x_i)\right) + \Lap(1/\eps) $ \tcp*{Laplace noise}
        \eIf(\tcp*[f]{Test against threshold}){$\hat{h} \ge T$}{
            \ForEach{$i\in A$ such that $h(i,x_i)  = 1$}{
                $C_i\gets C_i + 1$ \;
                \lIf{$C_i = r$}{
                    $A\gets A\setminus\{i\}$ 
                }
            }
            \Return{$\hat{h}$ \;
        }
        }{\Return{$\perp$}}
\Indm        
    \BlankLine
\OnInput(\tcp*[f]{Main Loop: process queries}){$(f,T)$}{ \Return{$\AboveThreshold^A_\eps(f,T)$}}
}
\end{algorithm2e}

\subsection{ADA tools}

The generalization property of differential privacy applies when the dataset $\bsx$ is sampled from a distribution. It states that if a predicate $h$ is selected in a way that preserves the privacy of the sampled points then we can bound its generalization error: the count over $\bsx$ is not too far from the expected count when we sample from the distribution. We will use the following variant of the cited works (see \cref{genproof:sec} for a proof):

% \begin{theorem}[Generalization property of DP \cite{DworkFHPRR15,BassilyNSSSU:sicomp2021,FeldmanS17}] \label{thm:DP-generalization} \label{thm:DP-gen-mod}
\begin{restatable}[Generalization property of DP \cite{DworkFHPRR15,BassilyNSSSU:sicomp2021,FeldmanS17}]{theorem}{dpgen} \label{thm:DP-generalization} \label{thm:DP-gen-mod}
Let $\mathcal{A}:X^n \to 2^{X}$ be an $(\eps, \delta)$-differentially private algorithm that operates on a dataset of size $n$ and outputs a predicate $h: X\to \{0,1\}$. Let $\Dd=D_1\times\cdots D_n$ be a product distribution over $X^n$, let $\bsx=(x_1,\ldots,x_n) \sim \Dd$ be a sample from $\Dd$, and let $h\gets \mathcal{A}(\bsx)$. Then for any $T\ge 1$ it holds that 
\footnotesize{
\begin{align*}
    \Pr_{\substack{\bsx\sim \Dd,\\ h\gets \mathcal{A}(\bsx)}}\left[ e^{-2\e} \E_{\boldsymbol{y}\sim \Dd} h(\boldsymbol{y}) - h(\bsx) > \frac{4}{\eps}\log(T+1) + 2Tn\delta \right] < \frac{1}{T},\\
    \Pr_{\substack{\bsx\sim \Dd,\\ h\gets \mathcal{A}(\bsx)}}\left[
h(\bsx) -
e^{2\e} \E_{\boldsymbol{y}\sim \Dd} h(\boldsymbol{y}) 
> \frac{4}{\eps}\log(T+1) + 2Tn\delta \right] < \frac{1}{T},
\end{align*}
}
% \end{theorem}
\end{restatable}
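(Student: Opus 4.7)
The two inequalities are symmetric, so I focus on the lower tail. Let $Z(h,\bsx) := \sum_{i=1}^n h(x_i)$ denote the empirical count and $\mu(h) := \mathbb{E}_{\boldsymbol{y}\sim\Dd}[h(\boldsymbol{y})] = \sum_i \mathbb{E}_{y_i\sim D_i}[h(y_i)]$ the expected count on a fresh sample; both live in $[0,n]$, and the target inequality says that $Z$ cannot lie much below $e^{-2\eps}\mu(h)$.

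\textbf{Step 1: Per-coordinate DP coupling.} For each coordinate $i$, I would compare two ways of jointly producing the pair $(h,x_i)$: (a) draw $\bsx\sim\Dd$ and set $h\gets\mathcal{A}(\bsx)$; (b) draw $\bsx\sim\Dd$ and an independent $y_i\sim D_i$, then feed $\mathcal{A}$ the perturbed input $(x_1,\ldots,x_{i-1},y_i,x_{i+1},\ldots,x_n)$ to obtain $h$. Because $\Dd$ is a product distribution, the perturbed input has the same distribution as $\bsx$ and differs from $\bsx$ in at most the $i$-th coordinate, so $(\eps,\delta)$-DP of $\mathcal{A}$ forces the distributions of $h$ under (a) and (b) to be $(\eps,\delta)$-close. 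Applying this to the event $\{h(x_i)=1\}$ and using that in (b) the point $x_i$ is independent of $h$ (so $\Pr_{(b)}[h(x_i)=1] = \mathbb{E}_h[\mu_i(h)]$ with $\mu_i(h):=\Pr_{y_i\sim D_i}[h(y_i)=1]$), then summing over $i$, yields the expectation bound $\mathbb{E}[Z(h,\bsx)] \ge e^{-\eps}\mathbb{E}[\mu(h)] - n\delta$ and symmetrically $\mathbb{E}[Z(h,\bsx)] \le e^{\eps}\mathbb{E}[\mu(h)] + n\delta$.

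\textbf{Step 2: Lifting to a high-probability tail.} To upgrade the expectation bound to the claimed failure probability $1/T$, I would use an amplification argument. The natural route is a parallel-copy monitor $\mathcal{M}$ that samples $T$ independent datasets $\bsx^{(1)},\ldots,\bsx^{(T)}\sim\Dd$, runs $\mathcal{A}$ on each to obtain $h^{(j)}$, and reports $h^{(j^\ast)}$ for the index $j^\ast$ maximizing $e^{-2\eps}\mu(h^{(j)}) - Z(h^{(j)},\bsx^{(j)})$. Viewed as an algorithm on $Tn$ i.i.d.\ coordinates, $\mathcal{M}$ remains $(\eps,\delta)$-DP with respect to each individual coordinate (each coordinate is consumed by exactly one of the $T$ runs), so the per-coordinate coupling of Step~1 applies to $\mathcal{M}$ and controls $\mathbb{E}[\max_j(\cdot)]$. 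Markov's inequality together with the i.i.d.\ identity $\Pr[\max_j > \tau] \ge 1-(1-p)^T$ then converts this into an upper bound on the single-copy failure probability $p$. Balancing the two contributions produces the claimed form: the $\delta$ slack accumulates multiplicatively, giving $2Tn\delta$, while the $\eps$-scale slack interacts with the logarithm from amplification to give the $\frac{4}{\eps}\log(T+1)$ term. The upper tail follows by the mirror argument, replacing $e^{-2\eps}\mu - Z$ with $Z - e^{2\eps}\mu$.

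\textbf{Main obstacle.} The delicate part is Step~2: routing the $T$-fold amplification so that the constants come out to exactly $\frac{4}{\eps}\log(T+1) + 2Tn\delta$ at failure probability $1/T$. The logarithmic factor in particular reflects a tight, exponential-moment-style application of the amplification identity rather than a crude Markov bound on a linear quantity, and the multiplicative factors $e^{\pm 2\eps}$ (instead of $e^{\pm\eps}$ from Step~1) are what absorbs the extra $\eps$ of slack introduced by the tail conversion.
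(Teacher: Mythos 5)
Your plan has the right skeleton — an expectation bound per coordinate (your Step 1 essentially rederives the lemma the paper imports from Kontorovich et al.), followed by a $T$-fold monitor to upgrade expectation to tail — and the monitor in your Step 2 is structurally the same device the paper uses (an algorithm $\mathcal{B}$ that runs $\mathcal{A}$ on $T$ fresh datasets and reports a single $(h,t)$). But there is a genuine gap in how the monitor selects which copy to report, and it is exactly the point where the $\frac{4}{\eps}\log(T+1)$ and $e^{\pm2\eps}$ show up.

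You propose that the monitor "reports $h^{(j^\ast)}$ for the index $j^\ast$ maximizing $e^{-2\eps}\mu(h^{(j)})-Z(h^{(j)},\bsx^{(j)})$" and then claim $\mathcal{M}$ "remains $(\eps,\delta)$-DP." Both statements fail. The argmax over $j$ is a deterministic, highly data-dependent selection: a change in a single coordinate of $\bsx^{(j)}$ can change the entire reported predicate, so the selection step by itself is not $(\eps,\delta)$-DP, and it certainly does not come for free as you assert. The paper's fix is to replace the argmax by the exponential mechanism with score $e^{-2\eps}h(\Dd)-h(\bsx_t)$ and privacy parameter $\eps$ (augmented with a dummy pair $(h^0,1)$ with $h^0\equiv 0$ so the max inside the expectation is never negative). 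This has two consequences you need and do not currently account for: (i) by composition the monitor is $(2\eps,\delta)$-DP, not $(\eps,\delta)$-DP, and this is the actual source of the $e^{\pm2\eps}$ factor in the theorem, rather than "absorbing slack from the tail conversion"; and (ii) the exponential mechanism's utility guarantee loses an additive $\tfrac{2}{\eps}\log|F|=\tfrac{2}{\eps}\log(T+1)$, which combined with the factor-$2$ loss from Markov's inequality applied to the $\geq\!1/2$ event $\max_j(\cdot)\ge\tau$ is precisely where the $\tfrac{4}{\eps}\log(T+1)$ comes from. Your "exponential-moment-style application of the amplification identity" gestures in the right direction but is not the same thing: there is no Chernoff/MGF calculation here, and no clean way to make one work, because the quantity $e^{-2\eps}\mu(h^{(j)})-Z(h^{(j)},\bsx^{(j)})$ is a function of the private output, so you cannot condition on it without a private selection rule. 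Until you replace the argmax with the exponential mechanism (or an equivalent private selection), Step 2 does not close.
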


where $h(\bsy)$ denotes the total value of $h$ over elements of $\bsy$.

When applying \cref{thm:DP-generalization}, we will assume that $h$ also takes the index $i$ as an argument (so, we write $h(i, x_i)$ instead of $h(x_i)$). This is equivalent because we can replace $D_i$ with a distribution that samples the tuple $(i, x_i)$ for $x_i \sim D_i$. 

\section{ADA with fine-grained analysis} \label{framework:sec}

We now consider a variation of the ADA framework where
we sample a dataset $\boldsymbol{x}\sim \Dd$ from a  product distribution $\Dd$ and then process adaptive linear threshold queries as in Algorithm~\ref{algo:svt-individual} over the dataset $\boldsymbol{x}$. 
The benefit of this is obtaining bounds in terms of the 
per-key participation in queries (that is the number of queries $h$ for which $h(i,x_i)=1$), which is always lower
than the total number of queries. 
Moreover, the approach tolerates a small fraction of deactivated indices in each query, which simply contribute proportionally to the error. 

We bound the error due to generalization and sampling and due to the privacy noise and the deactivation of keys that reached the charging limit $r$:

\begin{lemma} [Generalization and sampling error bound] \label{generror:lemma}
    Let $\Dd=D_1\times\cdots\times D_n$ be a product distribution over $X^n$. Let 
    $\boldsymbol{x}\sim \Dd$ be a sampled dataset. Let $\alpha, \beta > 0$ be sufficiently small (i.e., smaller than some absolute constant).
    Consider an execution of \cref{algo:svt-individual} on dataset $\boldsymbol{x}$ with $m$ adaptive queries, parameter $r \gg \log (n/\beta)$, and \[ \e_0 \coloneqq \frac{\alpha}{4\sqrt{r\log(n/\beta)}}.\] 
    Then it holds that with probability at least $1-\beta$, for all of the $m$ query predicates $h$,
    {\small
\[
 \left| \E_{\boldsymbol{y}\sim \Dd} [h(\boldsymbol{y})] - h(\boldsymbol{x}) \right| <  \alpha \cdot \E_{\boldsymbol{y}\sim \Dd} [h(\boldsymbol{y})] + O\p{\frac{\log(m/\beta)}{\alpha}}.
\]}    

\end{lemma}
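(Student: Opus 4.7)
The plan is to combine the DP guarantee of \cref{algo:svt-individual} from \cref{SVTindividual:thm} with the DP generalization property \cref{thm:DP-generalization}, and apply the latter to each of the $m$ adaptively chosen query predicates after a union bound.

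First I would calibrate the privacy parameters. Running \cref{algo:svt-individual} with noise scale $\eps_0$ and per-key budget $r$ is, by \cref{SVTindividual:thm}, overall $(\eps',\delta')$-DP in $\boldsymbol{x}$, where $\eps' = O\!\left(\sqrt{r\log(1/\delta)}\,\eps_0\right)$ and $\delta' = 2^{-\Omega(r)} + \delta$ for any $\delta \in (0,1)$ we choose. I would set $\delta := \mathrm{poly}(\beta/(nm))$; because $r \gg \log(n/\beta)$ (with sufficient slack to absorb the logarithms in $m$ and $1/\alpha$), the term $\log(1/\delta)$ remains of the same order as $\log(n/\beta)$. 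Plugging the given $\eps_0 = \alpha/(4\sqrt{r\log(n/\beta)})$ into the SVT bound then yields $\eps' = O(\alpha)$, while $\delta'$ is polynomially small in $\beta/(nm)$.

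Next I would view each query predicate as a DP output. For every $i \in [m]$, the predicate $h_i$ is a function of the transcript seen so far, which is itself $(\eps',\delta')$-DP in $\boldsymbol{x}$; hence by post-processing the map $\boldsymbol{x}\mapsto h_i$ is $(\eps',\delta')$-DP. I would then apply \cref{thm:DP-generalization} to $h_i$ with $T := m/\beta$ and combine both one-sided tails to get, except on an event of probability $\beta/m$,
\[
\bigl| h_i(\boldsymbol{x}) - \E_{\boldsymbol{y}\sim\Dd} h_i(\boldsymbol{y}) \bigr| \le \bigl(e^{2\eps'}-1\bigr)\,\E_{\boldsymbol{y}\sim\Dd} h_i(\boldsymbol{y}) + \frac{4}{\eps'}\log(T+1) + 2Tn\delta'.
\]
Since $\eps' = O(\alpha)$ we have $e^{2\eps'}-1 = O(\alpha)$ and $4/\eps' = O(1/\alpha)$, while the choice of $\delta'$ above makes $2Tn\delta'$ at most a constant that is absorbed into the $O(\log(m/\beta)/\alpha)$ term. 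A union bound over all $m$ queries then upgrades the $1-\beta/m$ per-query guarantee to the desired $1-\beta$ uniform guarantee.

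The main obstacle is balancing the parameters: $\delta$ must be small enough to kill the $2Tn\delta$ generalization slack and the $2^{-\Omega(r)}$ DP failure probability, yet $\log(1/\delta)$ must stay comparable to $\log(n/\beta)$ so that the composed privacy $\eps' = O\!\left(\sqrt{r\log(1/\delta)}\,\eps_0\right)$ is still $O(\alpha)$; the lower bound $r \gg \log(n/\beta)$ is exactly what permits both constraints to be met simultaneously, and converts the multiplicative $e^{\pm 2\eps'}$ factor in the generalization theorem into the $\alpha\cdot \E h$ term in the claimed bound.
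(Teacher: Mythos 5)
Your proposal is correct and follows essentially the same route as the paper's proof: invoke \cref{SVTindividual:thm} to get the overall DP parameters of \cref{algo:svt-individual}, feed them into the generalization theorem \cref{thm:DP-generalization}, note that $e^{2\eps}-1 = O(\alpha)$, and union bound over the $m$ adaptively chosen queries with $T \approx m/\beta$. The only (cosmetic) difference is your choice $\delta = \mathrm{poly}(\beta/(nm))$ versus the paper's $\delta = \beta/n^2$; yours is in fact a bit more careful, since it makes the $2Tn\delta$ slack negligible without relying on any relationship between $m$ and $n$.
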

\begin{proof}
    The first claim of the sampling and generalization error, follows from \cref{SVTindividual:thm} and \cref{thm:DP-generalization}
    
    For the privacy parameters in \cref{SVTindividual:thm} we set $\delta = \beta/n^2$ and obtain $\eps = \sqrt{r\log(n^2/\beta)}\cdot\eps_0 < \al/2\sqrt 2$.
    From \cref{thm:DP-generalization} we get that, for each query $h$, 
    the additive error $\left| \E_{\boldsymbol{y}\sim \Dd} h(\boldsymbol{y}) - h(\boldsymbol{x}) \right|$ is at most \[
    (e^{2\e}-1)\cdot \E_{\boldsymbol{y}\sim \Dd} h(\boldsymbol{y}) + \frac{4}{\eps }\log(T+1) + 2Tn\delta,
    \]
    with probability at least $1-2/T$.
    Note that we have $e^{2\e}-1 < \al$. Thus, setting $T=2m/\beta$ (so that a union bound over all queries gives a failure probability of $1-\beta$), the result follows.
\end{proof}

\begin{claim} [Noise and deactivation error bounds] \label{noisedeactiveerror:claim}
Under the conditions of \cref{generror:lemma},
with probability at least $1-\beta$, for all $m$ query predicates $h$, for $\hat{h} \coloneqq \sum_{i\in A} h(i,x_i )+\Lap(1/\eps_0)$,
\begin{gather*}
    h(\boldsymbol{x}) - \hat{h} >
    - \log(2m/\beta)/\eps_0,\\
    h(\boldsymbol{x}) - \hat{h} <
    \log(2m/\beta)/\eps_0 + \sum_{i\in [n]\setminus A} h(i,x_i).
\end{gather*}
\end{claim}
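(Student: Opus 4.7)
The plan is to decompose $h(\boldsymbol{x})$ according to the active/inactive partition of $[n]$, reduce both inequalities to a single concentration bound on the Laplace noise, and then union-bound over the $m$ queries.

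First, I would fix one query and let $A$ denote the active set when that query is processed and $Z \sim \Lap(1/\eps_0)$ the noise sample drawn inside the corresponding \texttt{AboveThreshold} call, so that by the definition of $\hat h$ in Algorithm~\ref{algo:svt-individual},
\[
\hat h \;=\; \sum_{i \in A} h(i,x_i) \;+\; Z.
\]
Splitting $h(\boldsymbol{x}) = \sum_{i \in [n]} h(i,x_i)$ over active and deactivated indices yields
\[
h(\boldsymbol{x}) - \hat h \;=\; \sum_{i \in [n]\setminus A} h(i,x_i) \;-\; Z.
\]
Since $\sum_{i \in [n]\setminus A} h(i,x_i) \geq 0$, the first (lower) inequality of the claim reduces to $Z < \log(2m/\beta)/\eps_0$, and the second (upper) inequality reduces to $-Z < \log(2m/\beta)/\eps_0$. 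Jointly, these are the event $|Z| < \log(2m/\beta)/\eps_0$.

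Next I would invoke the standard Laplace tail bound $\Pr[|\Lap(1/\eps_0)| > t/\eps_0] = e^{-t}$ with $t = \log(2m/\beta)$, producing a per-query failure probability of $\beta/(2m)$. Because the noise samples across the $m$ \texttt{AboveThreshold} calls are drawn independently, a union bound controls the total failure probability by $\beta/2 < \beta$, giving the claim.

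The only thing requiring care, rather than a real obstacle, is that the Laplace noise in each call is drawn fresh and independently of the current active set $A$ (which is a deterministic function of the prior transcript), so the tail bound applies uniformly across queries regardless of how $A$ has evolved. The asymmetry between the two inequalities, with the deactivation sum appearing only in the upper bound, is by design: it separates the pure noise error from the deactivation error, so that the latter can subsequently be absorbed into the sketch-size budget when this claim is combined with \cref{generror:lemma}.
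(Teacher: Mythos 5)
Your proof is correct and takes essentially the same approach as the paper: decompose $h(\boldsymbol{x}) - \hat h$ over active and deactivated indices (which the paper leaves implicit in ``the result follows immediately''), reduce both inequalities to a Laplace tail bound, and union-bound over the $m$ queries. Your version just spells out the reduction step and the per-query failure probability ($\beta/(2m)$, slightly tighter than the paper's stated $\beta/m$) more explicitly.
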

\begin{proof}
    Each Laplace noise $\Lap(1/\eps_0)$ is bounded by $\pm\log(2m/\beta)/\eps_0$ with probability at least $1-\beta/m$, so by a union bound, with probability at least $1-\beta$, it is bounded as such for all queries, and the result follows immediately.
\end{proof}

The total error of $\hat{h}$ with respect to the expectation $\E_{\boldsymbol{y}\sim \Dd} h(\boldsymbol{y})$ is bounded by the sum of errors in \cref{generror:lemma} and \cref{noisedeactiveerror:claim}:
\begin{corollary} \label{totalerror:coro}
For some constants $c_1,c_2>0$, under the conditions of \cref{generror:lemma}, with probability at least $1-\beta$, for all $m$ queries $h$,
\[ -\Delta <
\E_{\boldsymbol{y}\sim \Dd} [h(\boldsymbol{y})] - \hat{h} 
< \Delta + \sum_{i\in [n]\setminus A} h(i,x_i),
\]
where
\[
\Delta = \alpha\cdot \E_{\boldsymbol{y}\sim \Dd} [h(\boldsymbol{y})] + O(\al^{-1} \sqrt{r} \log^{3/2}(mn/\beta)).
\]
\end{corollary}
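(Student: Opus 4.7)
The statement is a direct combination of the two preceding results, so the plan is to apply Lemma~\ref{generror:lemma} and Claim~\ref{noisedeactiveerror:claim} in tandem via a triangle inequality on the decomposition
\[
\E_{\bsy\sim\Dd}[h(\bsy)] - \hat{h} \;=\; \bigl(\E_{\bsy\sim\Dd}[h(\bsy)] - h(\bsx)\bigr) + \bigl(h(\bsx) - \hat{h}\bigr).
\]
First, I would invoke Lemma~\ref{generror:lemma} with failure budget $\beta/2$ and Claim~\ref{noisedeactiveerror:claim} with failure budget $\beta/2$, so that by a union bound both events hold simultaneously with probability at least $1-\beta$ across all $m$ queries. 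The constants swallowed by the $O(\cdot)$ only shift by a factor of two, so the form of $\Delta$ is unaffected.

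For the upper bound, adding the Lemma bound $\E_{\bsy}[h(\bsy)] - h(\bsx) < \alpha\,\E_{\bsy}[h(\bsy)] + O(\log(m/\beta)/\alpha)$ to the Claim upper bound $h(\bsx) - \hat{h} < \log(2m/\beta)/\e_0 + \sum_{i\in[n]\setminus A} h(i,x_i)$ yields
\[
\E_{\bsy}[h(\bsy)] - \hat{h} < \alpha\E_{\bsy}[h(\bsy)] + O\!\p{\tfrac{\log(m/\beta)}{\alpha}} + \tfrac{\log(2m/\beta)}{\e_0} + \sum_{i\in[n]\setminus A} h(i,x_i).
\]
For the lower bound, the same sum with the opposite-sign versions of each inequality gives $\E_{\bsy}[h(\bsy)] - \hat{h} > -\alpha\E_{\bsy}[h(\bsy)] - O(\log(m/\beta)/\alpha) - \log(2m/\beta)/\e_0$, with no contribution from deactivated indices (since those can only shrink $\hat{h}$ relative to $h(\bsx)$, which is the favorable direction for this side).

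It remains to check that the noise-scale term absorbs the generalization-scale term and matches the stated $\Delta$. Substituting $\e_0 = \alpha / (4\sqrt{r\log(n/\beta)})$ gives
\[
\tfrac{\log(2m/\beta)}{\e_0} \;=\; O\!\p{\alpha^{-1}\sqrt{r\log(n/\beta)}\,\log(m/\beta)} \;=\; O\!\p{\alpha^{-1}\sqrt{r}\,\log^{3/2}(mn/\beta)},
\]
which dominates the $O(\log(m/\beta)/\alpha)$ term from Lemma~\ref{generror:lemma}. Folding these into the additive part of $\Delta$ and retaining $\alpha\E_{\bsy}[h(\bsy)]$ as the multiplicative part yields exactly the claimed $\Delta$.

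I do not expect a substantive obstacle here: the result is a bookkeeping combination of two previously established bounds. The only points that warrant care are (i) allocating the failure probability across the two events via a clean union bound, (ii) correctly tracking the asymmetry of the Claim~\ref{noisedeactiveerror:claim} bounds (the $\sum_{i\in[n]\setminus A} h(i,x_i)$ term appears only on one side, reflecting the fact that deactivation biases $\hat{h}$ downward), and (iii) verifying that upon substituting $\e_0$ the noise contribution dominates and merges cleanly into the form of $\Delta$ without introducing extra $\log$ factors.
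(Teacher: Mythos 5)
Your proof is correct and takes essentially the same route as the paper, which states this corollary follows directly by summing the generalization/sampling error of Lemma~\ref{generror:lemma} and the noise/deactivation error of Claim~\ref{noisedeactiveerror:claim}. Your bookkeeping is sound: the decomposition via triangle inequality, the asymmetry of the deactivation term (appearing only on the side where dropping active indices shrinks $\hat h$), and the substitution of $\e_0 = \alpha/(4\sqrt{r\log(n/\beta)})$ giving $\log(2m/\beta)/\e_0 = O(\al^{-1}\sqrt{r}\,\sqrt{\log(n/\beta)}\,\log(m/\beta)) = O(\al^{-1}\sqrt{r}\log^{3/2}(mn/\beta))$, which dominates the $O(\log(m/\beta)/\al)$ term from the generalization bound, all match what the paper intends.
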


We can apply this fine-grained ADA to analyze the robustness of randomized data structures (or algorithms) that sample randomness $\boldsymbol{\rho}$ and process adaptive queries $M_i$ that depend on the interaction till now and the randomness $\boldsymbol{\rho}$. 
To do so, we need to specify a product distribution 
$\mathcal{\Dd} = D_1 \times D_2 \times \dots \times D_n$
so that 
\begin{enumerate}
\item
    The distribution of $\boldsymbol{\rho}$ is $\Dd$.
\item The queries in the original problem can be specified in terms of  linear queries over $\bsr$ and have statistical guarantees of utility over $\bsr \sim \Dd$. 
% Moreover, we get utility also with additional relative error 
% so that we get utility if the approximation is good over the distribution of $\rho \sim \Dd$.
\end{enumerate}
When applying this to sketching maps which do not contain all the information of $\bsr$, we will need to ensure that the linear queries we use can be evaluated over the sketch.

% \ignore{
% \begin{remark}
%     Current writing is just for "above threshold" queries. But framework can be generalize to between threshold (pay only when close to threshold) and selection (multiple predicates and select one or top-$k$ scores and pay only for selections). That is, applications in the target charging framework. But the individual DP algorithms are just counts since this is what the ADA framework does (linear queries). 
% \end{remark}
% }

\section{The Bottom-\texorpdfstring{$k$}{k} Cardinality Sketch} \label{bottomk:sec}

\begin{algorithm2e}[t]\caption{Bottom-$k$ Cardinality Sketch and Standard Estimator}\label{bottomk:algo}
\DontPrintSemicolon
{\small
Sample $\rho_i \sim U[0, 1]$ for $i\in [n]$.\tcp{Randomness for the Sketching Map}

\Function(\tcp*[f]{Bottom-$k$ sketching map using $\boldsymbol{\rho}$}){ $\BkSketch_{\boldsymbol{\rho}}(V)$}
{
\KwIn{Set $V\subset [n]$}
\eIf{$|V|\leq k$}{\Return{$\{(i, \rho_i) \mid i \in V\}$ }}
{\Return{$\{(i, \rho_i) \mid i \in V, \rho_i < \boldsymbol{\rho}_{(k),V}\}$}\tcp{
where ${\boldsymbol{\rho}}_{(k),V}$ is the $k$th smallest in the multiset $\{\rho_i \mid i\in V\}$}
}}
\BlankLine
\Function(\tcp*[f]{Standard estimator}){$\SCest_k(S)$}{
\KwIn{A bottom-$k$ sketch $S$}
\eIf{$|S| < k$}{
    \Return{$|S|$} 
}{
$\tau \gets \max_{(i,\rho_i)\in S} \rho_i$\tcp*{$k$th smallest $\rho_i$}
        \Return{$(k-1)/\tau$}
}
}
\OnInput(\tcp*[f]{Main Loop}){$S$}{\Return{$\SCest_{k}(S)$}}
}
\end{algorithm2e}

\subsection{Sketch and standard estimator}
The bottom-$k$ cardinality sketch and standard estimator are described in \cref{bottomk:algo}. 
Let the ground set of keys be $[n]$.
We sample a vector of random values $\boldsymbol{\rho} \sim [0, 1]^n$.
% ; this will be the database which we keep private. 
That is, for each key $i \in [n]$ there is an associated i.i.d.\ $\rho_i \sim \Dd$.
The vector $\boldsymbol{\rho}$ specifies the bottom-$k$ sketching map $\txtBkSketch_{\boldsymbol{\rho}}(V)$ that maps a set $V\subset [n]$ to its sketch. The sketch consists of the pairs $(i, \rho_i)$ for the $k$ values of $i \in V$ such that $\rho_i$ is smallest. When $|V|\leq k$, the sketch contains all elements of $V$. Note that, though $n$ and also $|V|$ can be very large, the size of the sketch is at most $k$.   This sketching map is clearly composable.\footnote{The analysis uses a common assumption of full i.i.d.\ randomness in the specification of the sketching maps. Note that $O(\log k)$ bits of representation are sufficient. Implementations use pseudo-random hash maps $i\mapsto \rho_i$.}

For a query set $V\subset [n]$, we apply an estimator to the sketch $S := \txtBkSketch_r(V)$ to obtain an estimate of the cardinality of $V$. The standard estimator $\txtSCest(S)$ computes $\tau$ which is the $k$th order statistics of the $\rho_i$ values in the sketch, which is a sufficient statistic of the cardinality. It then returns the value $(k-1)/\tau$. The estimate is unbiased, has variance at most $|V|/(k-2)$, and an exponential tail (see e.g.~\cite{ECohenADS:TKDE2015}).
This standard estimator is know to optimally use the information in the sketch $S$ but
can be attacked with a linear number of queries~\cite{AhmadianCohen:ICML2024}.

\subsection{Basic robust estimator}

\begin{algorithm2e}[t]\caption{Basic Robust Cardinality Estimator}\label{bottomkrobust:algo}
\DontPrintSemicolon
{\small
\Function(\tcp*[f]{Estimate $|V|$ from $\BkSketch_{\boldsymbol{\rho}}(V)$}){$\RBCest_{k}(S)$}{
    \KwIn{A bottom-$k$ sketch $S$, $\alpha\in(0,0.5)$}
    
    \eIf(\tcp*[f]{Return exact value when $\leq k$}){$|S| < k$}{
        \Return{$|S|$}
    }{
        $\tau \gets k/2n$, $T = (1-\al)k$\;
        
        \While {$(\tau<1)$ and $(\tw{h}\gets \sum_{(i,\rho_i)\in S} \ind{(\rho_i < \tau)}+\Lap(1/\eps_0)) < T $} {
            $\tau \gets (1+\al/4)\tau$\;
        }
        
        \Return{$T/\tau$}        
    }  
}
\KwIn{Parameters $k, r, n\geq 1$ and $\al, \beta > 0$}
$\eps_0 \gets \f{\al/8}{4\sqrt{r \log(n/(\beta/4))}}$ \tcp{as \cref{generror:lemma} with $\frac{\al}{8}$, $\frac{\beta}{4}$}
\OnInput(\tcp*[f]{Main Loop}){$S$}{\Return{$\RBCest_{k,r}(S)$}}
}
\end{algorithm2e}

\cref{bottomkrobustbaseline:algo} describes a robust estimator $\txtRBCest$ that is applied to a bottom-$k$ sketch.

We analyze this estimator under the assumption that the query set sequence $(V_j)_{j\in [t]}$ has the property that each key
$i\in [n]$ 
appears in at most $r$ sketches in $( \txtBkSketch_{\boldsymbol{\rho}}(V_i))_{j\in[t]}$:
\begin{equation} \label{limitassumptionsketch:eq}
\forall i\in[n], \sum_{j\in[t]} \ind{i\in  \txtBkSketch_{\boldsymbol{\rho}}(V_i)} \leq r.
\end{equation}
Note that for this to hold it suffices that each key 
is included in at most $r$ query sets. That is,
$\forall i\in[n], \sum_{j\in[t]} \ind{i\in V_j} \leq r$.

\begin{theorem} [Basic robust estimator guarantee] \label{basicrobust:thm}
    If the query sequence in \cref{bottomkrobust:algo} satisfies \eqref{limitassumptionsketch:eq} for some $r \gg \log(n/\beta)$, then for a value of $k = O(\al^{-2} \sqrt r \log^{3/2}(n/\beta))$, every output will be $(1\pm\al)$-accurate with probability at least $1-\beta$.
\end{theorem}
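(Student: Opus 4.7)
The plan is to cast the execution of $\txtRBCest$ as an instance of the fine-grained ADA framework of \cref{framework:sec}, with $\Dd = U[0,1]^n$ serving as the product distribution over the randomness $\boldsymbol{\rho}$. I model each iteration of the while loop of $\txtRBCest$ on cardinality query $V_j$ with current value $\tau$ as an AboveThreshold query for the predicate
\[
h_{V_j,\tau}(i,\rho_i) := \ind{i \in V_j}\cdot\ind{\rho_i < \tau}
\]
with threshold $T = (1-\alpha)k$. Two compatibility facts anchor the reduction. First, whenever the true count $c(\tau) := \sum_{i\in[n]} h_{V_j,\tau}(i,\rho_i)$ is at most $k$, the sketch-based count $\sum_{(i,\rho_i)\in S}\ind{\rho_i<\tau}$ actually computed by $\txtRBCest$ equals $c(\tau)$, since then every $i\in V_j$ with $\rho_i<\tau$ is among the $k$ smallest-priority keys of $V_j$ and hence in $S$. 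Second, under assumption~\eqref{limitassumptionsketch:eq}, every key lies in at most $r$ query sketches, and any key contributing to a positive query must lie in the corresponding sketch, so per-key positive participation is bounded by $r$, matching the charging budget of \cref{algo:svt-individual}. Because this budget is never saturated, the estimator's behavior coincides with an execution of \cref{algo:svt-individual} in which no index is ever deactivated.

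With this reduction in hand, I invoke \cref{totalerror:coro} with $\alpha' := \alpha/8$ and $\beta' := \beta/4$ (matching the $\eps_0$ chosen in \cref{bottomkrobust:algo}). Since the deactivation term vanishes and $\E_{\boldsymbol{y}\sim\Dd}\, h_{V_j,\tau}(\boldsymbol{y}) = |V_j|\tau$, this yields with probability at least $1-\beta$, uniformly across all $m$ AboveThreshold queries ($m \leq t\cdot O(\alpha^{-1}\log(n/k))$ is absorbed into logs),
\[
\bigl| |V_j|\tau - \hat h(\tau) \bigr| \leq \alpha'\,|V_j|\tau + \Delta_0, \qquad \Delta_0 = O\!\bigl(\alpha^{-1}\sqrt{r}\,\log^{3/2}(n/\beta)\bigr).
\]
I then argue the terminal $\tau^*$ lies in the ``safe regime'' $c(\tau^*)\leq k$. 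At $\tau^{*-} := \tau^*/(1+\alpha/4)$ the loop did not exit, so $\hat h(\tau^{*-}) < T$, and since the Laplace noise is at most $\eta := \log(2m/\beta)/\eps_0$ in magnitude w.h.p., the observed sketch count satisfies $c_S(\tau^{*-}) < T + \eta$. For $k$ above the claimed threshold one has $\eta < \alpha k$, so $c_S(\tau^{*-}) < k$, forcing $c(\tau^{*-}) = c_S(\tau^{*-}) < T + \eta$. Chaining with the generalization bound gives $|V_j|\tau^* \leq (1+\alpha/4)(T + \Delta_0)/(1-\alpha')$, and applying the bound once more yields $c(\tau^*) \leq (1+\alpha')|V_j|\tau^* + \Delta_0 < k$; the reduction is to $(1+\alpha/4)(1-\alpha)(1+\alpha')/(1-\alpha') < 1$, which holds with a comfortable margin for $\alpha' = \alpha/8$ and $\alpha$ below a universal constant.

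Inside the safe regime, $\hat h(\tau^*)\geq T$ yields $|V_j|\tau^* \geq (T-\Delta_0)/(1+\alpha')$. Pairing with the upper bound from the previous paragraph and substituting into the returned value $T/\tau^*$ places it in $[(1-\alpha)|V_j|,(1+\alpha)|V_j|]$ as soon as $\Delta_0 = O(\alpha T)$, which holds for $k = C\alpha^{-2}\sqrt{r}\log^{3/2}(n/\beta)$ with $C$ a sufficiently large absolute constant. The regime $|V_j|<k$ is handled directly by the ``exact return'' branch of \cref{bottomkrobust:algo}, and the $\tau\geq 1$ exit is not reached under our parameters (at $\tau=1$ the generalization bound already gives $\hat h > T$ whenever $|V_j|\geq k$). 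The main obstacle is precisely the coupling between the finite-capacity sketch and the ADA framework: the noisy sketch-count is not intrinsically a linear predicate over $\boldsymbol{\rho}$, and all of the novelty lies in the self-consistent safe-regime step, which uses the generalization bound at $\tau^{*-}$ to certify that the stopping threshold $\tau^*$ never escapes the range where the sketch faithfully records every in-query key with $\rho_i<\tau^*$. Once this invariant is in place, the remainder is routine constant tracking through the geometric step $(1+\alpha/4)$ and the $(1\pm\alpha')$ generalization slack.
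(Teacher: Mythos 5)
Your reduction to the fine-grained ADA framework, the predicate $h_{V,\tau}$, the per-key participation bound, and the final constant-chasing through the geometric step $1+\alpha/4$ all match the paper's analysis in \cref{sec:analysis-basic}. The genuine departure is in how the gap between the sketch count $c_S(\tau)=\min(k,c(\tau))$ and the true count $c(\tau)=\sum_{i\in V}\ind{\rho_i<\tau}$ is closed. The paper's \cref{lemma:sub-h-hat} is a direct coupling argument: it replaces $\tw h$ with the full-count $\hat h$ once and for all, and observes that the two executions can only diverge at a step where the sketch-sum equals $k$, whereupon flipping the threshold $T=(1-\alpha)k$ forces $\Lap(1/\eps_0)<-\alpha k$, which has probability $e^{-\eps_0\alpha k}=\beta/\poly(n)$; a union bound over the $\poly(n)$ iterations finishes it, entirely before the ADA machinery is invoked. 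You instead run a ``safe-regime'' argument: you bound $c(\tau^{*-})<T+\eta<k$ using only the Laplace tail, then invoke the generalization bound at $\tau^{*-}$ and again at $\tau^{*}$ to pull $c(\tau^{*})<k$, certifying that the sketch never saturates along the trajectory.

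This route is sound, but it has a self-referential structure that needs to be surfaced: the generalization bound of \cref{totalerror:coro} is only guaranteed for queries that are actually submitted to \cref{algo:svt-individual}, and whether $\txtRBCest$'s trajectory coincides with such a submission is exactly what the safe-regime claim is trying to establish. The clean way to break this apparent circularity --- which you gesture at but do not make explicit --- is to reason about the \emph{idealized} execution that always uses full counts. That execution is ADA-compliant by construction, so generalization applies to all of its queries unconditionally; one then shows $c(\tau^{*}_{\mathrm{id}})<k$, uses monotonicity of $c(\cdot)$ to conclude $c(\tau)\le k$ for every $\tau$ the idealized loop visits, and finally couples to the actual execution (same Laplace noise) to conclude $\tau^{*}_{\mathrm{act}}=\tau^{*}_{\mathrm{id}}$. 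With that framing in place, your argument goes through and yields the same conclusion as the paper's \cref{prop:low-guarantee}--\cref{prop:high-guarantee}. What the paper's coupling buys is modularity --- the noise-tail argument is completely decoupled from the generalization machinery, so there is nothing to disentangle; what your approach buys is a slightly tighter picture of why the sketch never saturates, which is closer in spirit to the induction used later for the tracking estimator in \cref{claim:consistent-induct}.
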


% \eccomment{Shouldn't it be $1\pm\alpha$? Can we say that we can replace $n$ with the max of query set size and $t$? }

\subsection{Analysis of Basic robust estimator} \label{sec:analysis-basic}
In this section we prove \cref{basicrobust:thm}.

Before we start, we make some basic assumptions on the parameters which we will use throughout the proof. First, we pick
\[k = C \al^{-2} \sqrt r \log^{3/2}(n/\beta),\]
where the constant $C$ is chosen to be sufficiently large. Furthermore, note that if $k \ge n$ then we are always storing the whole set $V$ in the sketch, so we may assume that $k < n$ (and therefore $r < n^2$ and $\al > 1/\sqrt{n}$).

We map \cref{bottomkrobust:algo} to 
the framework of \cref{algo:svt-individual}, where the dataset is $\boldsymbol{\rho}$.

First, we show that we can consider the sum of $\ind{\rho_i < \tau}$ to be over the entire set $V$, rather than just those that are included in the bottom-$k$ sketch $S$:

\begin{lemma} \label{lemma:sub-h-hat}
Suppose that, in \cref{bottomkrobust:algo}, $\tw h = \sum_{(i,\rho_i)\in S} \ind{(\rho_i < \tau)}+\Lap(1/\eps_0))$ is replaced by $\hat h = \sum_{i \in V} \ind{(\rho_i < \tau)}+\Lap(1/\eps_0))$ (where the two Laplace random variables are coupled to be the same value). Then, the sequence of outputs of \cref{bottomkrobust:algo} changes with probability at most $\beta/4$.
\end{lemma}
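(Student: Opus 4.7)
The plan is to pinpoint exactly when $\tw h$ and $\hat h$ can disagree on a given call to $\txtRBCest$, and to argue that the gap between them is almost never large enough to flip the threshold test.

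The key observation is that $\tw h$ and $\hat h$ agree deterministically as long as $\tau \le \rho_{(k),V}$: any $i \in V$ with $\rho_i < \tau$ automatically satisfies $\rho_i < \rho_{(k),V}$ and hence lies in $S$. In general $\tw h \le \hat h$, with strict inequality possible only when $\tau > \rho_{(k),V}$. Fix a call on a set $V$ (the case $|V| \le k$ returns $|V|$ in both versions and is trivial) and let $j^{*}$ be the first iteration of the while loop whose threshold satisfies $\tau_{j^{*}} > \rho_{(k),V}$. For every $j < j^{*}$ the two versions have identical counts, and since the Laplace noises are coupled they take the same branch; thus any termination before $j^{*}$ occurs at the same iteration with the same output.

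At iteration $j^{*}$ itself, all elements of $S$ have $\rho_i < \tau_{j^{*}}$, so the noiseless sketch count equals $|S| \ge k-1$, and the full-set count $\sum_{i \in V} \ind{\rho_i < \tau_{j^{*}}} \ge |S|$. The threshold $T = (1-\al)k$ is therefore passed by both $\tw h$ and $\hat h$ unless the shared Laplace noise at this iteration is smaller than $-\al k + 1$. On the complementary good event, both algorithms halt at $j^{*}$ and return $T/\tau_{j^{*}}$. Consequently, the only way a divergence can occur on some query is that $j^{*}$ is reached and the noise at $j^{*}$ falls below $-\al k + 1$.

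It remains to bound the probability of this bad event. With $\eps_0 = \Theta(\al / \sqrt{r \log(n/\beta)})$ and $k = C\al^{-2}\sqrt{r}\log^{3/2}(n/\beta)$, one has $\al k \cdot \eps_0 = \Omega(C \log(n/\beta))$, so the standard Laplace tail bound yields per-query failure probability $\exp(-\Omega(C\log(n/\beta)))$. A union bound over the (polynomially many in $n$) queries then gives total failure probability at most $\beta/4$ once $C$ is sufficiently large. The main obstacle is the off-by-one bookkeeping in $|S| \in \{k-1,k\}$ around the order statistic $\rho_{(k),V}$, together with the observation that each call has exactly one ``critical'' iteration $j^{*}$, which is what keeps the union bound tight; the rest reduces to a direct Laplace tail calculation with the parameters already fixed by the theorem.
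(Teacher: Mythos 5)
Your proof is correct and follows essentially the same argument as the paper: divergence between the two coupled executions can occur only when the sketch count saturates at about $k$ while the full-set count exceeds $k$, and the probability of the threshold test flipping in that event is controlled by a Laplace tail bound of order $e^{-\eps_0\al k}=\poly(\beta/n)$ followed by a polynomial union bound. The paper simply union-bounds over all (query, iteration) pairs rather than isolating a single critical iteration $j^*$ per call as you do, but this is only a minor bookkeeping difference absorbed into the constant on $k$, and both routes yield the stated $\beta/4$ bound.
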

\begin{proof}
Since $S$ contains the $k$ values of $i$ such that $\rho_i$ is minimal, the only way to have $\hat h \neq \tw h$ is to have the sum over $S$ be equal to $k$ and the sum over $V$ to be greater than $k$. The outputs may then only differ if $\tw h < T$, but the probability that $k + \Lap(1/\eps_0) < (1-\al)k$ is at most $e^{-\e_0 \al k} < \beta/\poly(n)$,
where the polynomial in $n$ can be made as large as we like (by setting the constant on $k$).

Now, note that the total number of queries to \cref{bottomkrobust:algo} cannot exceed $nr \le \poly(n)$ by \eqref{limitassumptionsketch:eq}. Furthermore, the total number of iterations of the while loop per call is at most $O(\al^{-1} \log n) = \poly(n)$.

The lemma follows by taking a union bound over all iterations of the while loop and over all queries to \cref{bottomkrobust:algo}.
\end{proof}

With this lemma in mind, we will henceforth assume through this entire section that \cref{bottomkrobust:algo} uses $\hat h$ instead of $\tw h$, introducing a failure probability of at most $\beta/4$.

Now, we will show that the execution of \cref{bottomkrobust:algo} can be performed via queries to \cref{algo:svt-individual}, rather than accessing $\bsr$ directly. Indeed, note that the counting query $\hat h$ takes the same form (except for the check being over $[n]$ instead of $A$) as its analog in \cref{algo:svt-individual}, where the query function is
\[
h_{V,\tau}(i,\rho_i) \coloneqq \ind{i\in V \land \rho_i<\tau}.
\]

Observe that for any query sketch $S$, there is at most one positive test in \cref{bottomkrobust:algo}. Therefore, per assumption \eqref{limitassumptionsketch:eq} on the input, each index appears in at most $r$ positive tests. Therefore, if we were to instead perform these tests using \cref{algo:svt-individual}, all indices would remain active and nothing would ever be removed from $A$. Thus, we would have that $A=[n]$, so indeed the values of $\hat h$ are identical in \cref{algo:svt-individual} and \cref{bottomkrobust:algo}. Thus, \cref{bottomkrobust:algo} can be simulated by queries to \cref{algo:svt-individual}, so we may apply the results of \cref{framework:sec}.

In order to apply \cref{totalerror:coro}, we need to first compute the expectation of $h_{V, \tau}$ on $\Dd$:
\begin{claim} \label{card:claim}
    \begin{equation} \label{expectation:eq}
\E_{\boldsymbol{y}} \left[h_{V,\tau}(\boldsymbol{y})\right] = \tau |V|.
\end{equation}
\end{claim}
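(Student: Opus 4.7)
The plan is to unfold the definitions and apply linearity of expectation. Recall that for a predicate $h : [n] \times X \to \{0,1\}$, the notation $h(\boldsymbol{y})$ stands for the sum $\sum_{i \in [n]} h(i, y_i)$ (this is the convention introduced just after \cref{thm:DP-gen-mod}). Since the components of $\boldsymbol{y}\sim\Dd$ are independent with each $y_i\sim U[0,1]$ (this is the distribution of $\boldsymbol\rho$ specified in \cref{bottomk:algo}), the plan is simply:

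First, expand the expectation as a sum over indices:
\[
\E_{\boldsymbol{y}\sim\Dd}[h_{V,\tau}(\boldsymbol{y})] = \sum_{i\in[n]} \E_{y_i\sim U[0,1]}\bigl[\ind{i\in V \land y_i<\tau}\bigr].
\]
Next, observe that the event $\{i\in V\}$ depends only on the (deterministic) set $V$, so for $i\notin V$ the summand is zero, while for $i\in V$ it equals $\Pr_{y_i\sim U[0,1]}[y_i<\tau] = \tau$ (using $\tau\in[0,1]$, which is enforced by the while-loop guard in \cref{bottomkrobust:algo}). Summing over the $|V|$ contributing indices yields $\tau|V|$, which is the claimed identity.

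There is essentially no obstacle here: the statement is a direct computation once one parses the notational convention $h(\boldsymbol{y}) = \sum_i h(i,y_i)$ and uses independence and uniformity of the $y_i$. The only subtle point worth flagging is that $\tau$ must lie in $[0,1]$ for $\Pr[y_i<\tau] = \tau$ to hold, but this is guaranteed by the loop condition, and the case $\tau\ge 1$ never arises when the claim is invoked.
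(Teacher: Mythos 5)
Your proof is correct and takes essentially the same approach as the paper: linearity of expectation over the indices, the observation that $h_{V,\tau}(i,y_i)=0$ for $i\notin V$, and $\E[h_{V,\tau}(i,y_i)]=\tau$ for $i\in V$. Your additional note that $\tau\in[0,1]$ is needed for $\Pr[y_i<\tau]=\tau$ is a reasonable clarification that the paper leaves implicit.
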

\begin{proof}
  Observe that for $i\not\in V$, $h_{V,\tau}(i,y_i)=0$ for all $y_i$ and for $i\in V$,
$\E[h_{V,\tau}(i,y_i)] = \tau$.  
\end{proof}

Finally, recall from the proof of \cref{lemma:sub-h-hat} that the total number of iterations of the while loop (and thus the overall total number of calls to \cref{algo:svt-individual}) is at most $\poly(n)$, so in \cref{totalerror:coro} we can take $m=\poly(n)$.

\begin{algorithm2e}[t!]\caption{Tracking Robust Estimator}\label{bottomkrobustbaseline:algo}
\small{
\DontPrintSemicolon
\Function(\tcp*[f]{Robust Cardinality Estimate of $V$ from $\BkSketch_{\boldsymbol{\rho}}(V)$}){$\RCest_{k,r}(S)$}{
    \KwIn{A bottom-$k$ sketch $S$}
    
    \eIf(\tcp*[f]{Return exact value when $\leq k$}){$|S| < k$}{
        \Return{$|S|$}
    }{
        $\tau \gets k/2n$, $T \gets k/4$\;
        
        \While {$(\tau<1) \land (\tw h \ot \sum_{(i,\rho_i)\in S} \ind{(\rho_i < \tau)\land (C[i]<r)}+\Lap(1/\eps_0)) < T$} {
            $\tau \gets (1+\al/8)\tau$\;
        }
        % \eIf(\tcp*[f]{too many exposed keys}){$\tau>1$}{\Return{$\bot$}}{
        \ForEach(\tcp*[f]{Per-key tracking}){$(i,x_i)\in S$}{
            % \If(\tcp*[f]{increment exposure counter}){$x_i<\tau$}{\leIf{$i\not\in C$}{$C[i]\gets 1$}{$C[i]\gets C[i]+1$}}
            \lIf{$x_i<\tau$}{
                $C[i]\gets C[i]+1$
            }
        }
        
        \Return{$T/\tau$}
        % }
    }  
}
\KwIn{Parameters $k$, $r\geq 1$}
\tcp{Initialization}
$C\gets \{ \}$ \tcp*{Dictionary with default value $0$} 
$\eps_0 \gets \f{\al/16}{4\sqrt{r \log(n/(\beta/4))}}$\tcp{as \cref{generror:lemma} with $\frac{\al}{16}$, $\frac{\beta}{4}$}
\OnInput(\tcp*[f]{Main Loop}){$S$}{\Return{$\RCest_{k,r}(S)$}}
}
\end{algorithm2e}

Now, by \cref{totalerror:coro}, we have with probability at least $1-\beta/4$ that
\begin{gather}
\hat h < (1 + \al/8) \tau |V| + \al k / 8, \label{eq:hat-ub} \\
\hat h > (1 - \al/8) \tau |V| - \al k / 8, \label{eq:hat-lb}
\end{gather}
where we have used that
\[\Delta = O(\al^{-1} \sqrt r \log^{3/2}(n/\beta)) < \al k / 8,\]
by the choice of $k$. (Recall also that $[n] \setminus A$ is always empty, so the sum term in \cref{totalerror:coro} vanishes.) We assume henceforth that this probability-$(1-\beta/4)$ event does in fact occur. 

\begin{prop} \label{prop:low-guarantee}
Whenever $\tau < (1 - \al/2) T / |V|$, the while loop in \cref{bottomkrobust:algo} continues to the next value of $\tau$.
\end{prop}
\begin{proof}
Note that $\al k / 8 < \al T / 4$ since $T > k/2$. Thus, by \eqref{eq:hat-ub}, when $\tau < (1 - \al/2) T / |V|$, we have $\hat h < (1 + \al/8)(1-\al/2)T + \al T / 4 < T$ (for sufficiently small $\al$), so we are done.
\end{proof}

\begin{prop} \label{prop:high-guarantee}
Whenever $\tau > (1 + \al/2) T / |V|$, the while loop in \cref{bottomkrobust:algo} terminates.
\end{prop}
\begin{proof}
Again, by \eqref{eq:hat-lb}, when $\tau > (1 + \al/2) T / |V|$, we have $\hat h > (1 - \al/8)(1 + \al/2)T - \al T / 4 > T$, so we are done.
\end{proof}

Now, \cref{prop:low-guarantee} ensures that the output of the algorithm is always at least $(1-\al/2)|V|$. Moreover, since $\tau$ is incremented by factors of $1+\al/4$, there will be some value of $\tau$ tested that is between $(1 + \al/2) T / |V|$ and $(1 + \al) T / |V|$ (note that since $|V| \ge k$, we have $(1 + \al) T / |V| < 1$). By \cref{prop:high-guarantee}, this value will cause the while loop to terminate, yielding an output that is at most $(1+\al)|V|$. This completes the proof of \cref{basicrobust:thm}.

\begin{figure*}[t!]
    \centering
    \includegraphics[width=0.32\textwidth]{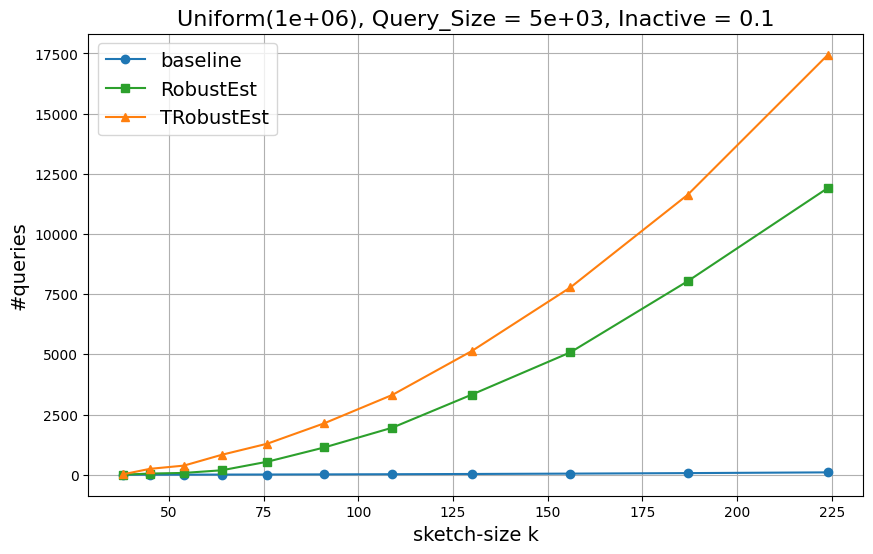}
    \includegraphics[width=0.32\textwidth]{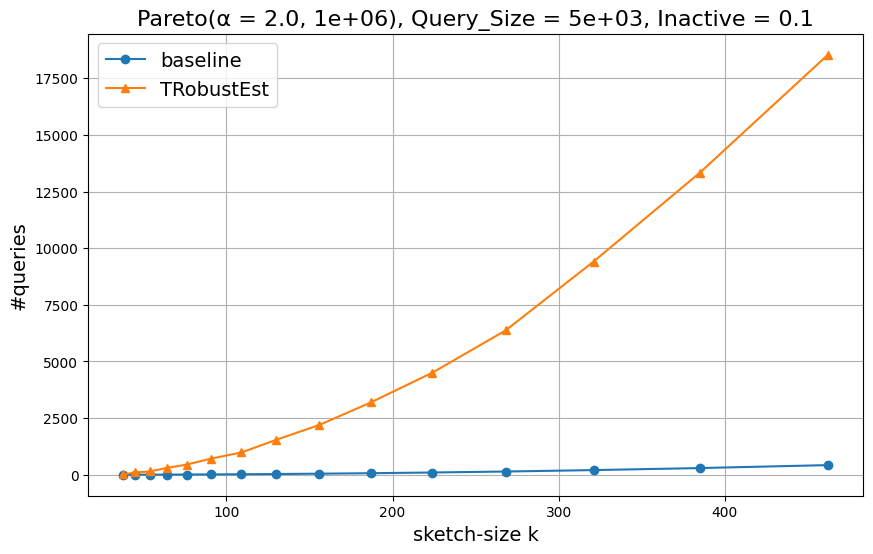}
    \includegraphics[width=0.32\textwidth]{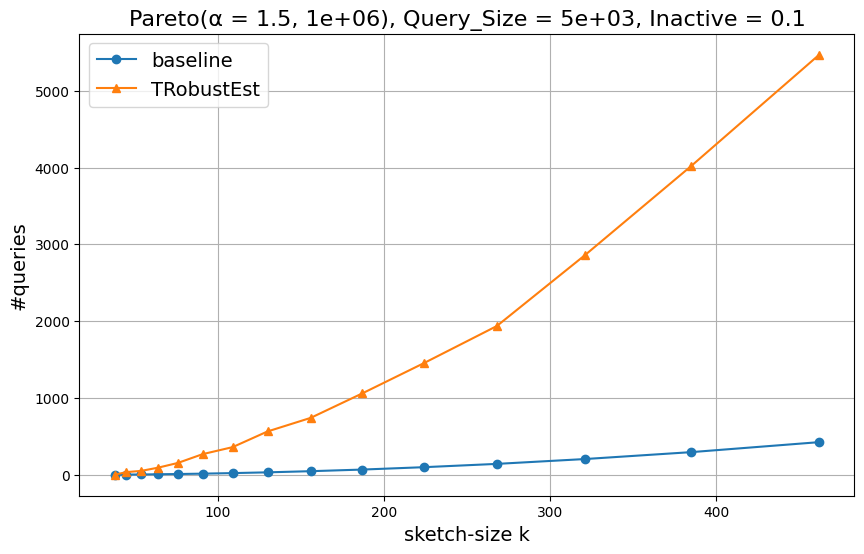}
\caption{Number of guaranteed queries for sketch size $k$. The gain factor of \txtRCest\ over baseline is over two orders of magnitude with the Uniform distribution, $40\times$ for Pareto with $\alpha=2$, and $12\times$ for Pareto with $\alpha=1.5$.}
    \label{fig:fine-grained-gain}    
\end{figure*}   

When assumption 
\eqref{limitassumptionsketch:eq} does not hold, that is, when some keys get \emph{maxed} (have participated in more than $r$ query sketches), the guarantees are lost even when there are 
no maxed keys in the query set. The universal attack constructions of~\cite{AhmadianCohen:ICML2024,CNSSS:ArXiv2024} show this is unavoidable. The attack 
fixes a ground set $U$ and identifies keys with low priorities (these are the keys that tend to be maxed). The query sets that is $U$ with the identified keys deleted has cardinality close to $|U|$ but the estimates of \txtRBCest\ would be biased down.
In the next section we introduce a tracking estimator that allows for smooth degradation in accuracy guarantees as keys get maxed.

\section{Robust estimator with tracking} \label{trackingest:Sec}

We next propose and analyze the estimator $\RCest$ in \cref{bottomkrobust:algo} that is an extension of \txtRBCest\ that includes tracking and deactivation of keys that appeared in the query sketches more than $r$ times. 
This estimator offers smooth degradation in estimate quality that depends only on the number of \emph{deactivated} keys present in the sketch and this is guaranteed as long as there are no queries where most of the sketch is deactivated. 
% It provides key-level robustness guarantees. Its advantages over basic:

% This estimator is guaranteed, on any input, not to leak information on the sketching map that can be used to construct adversarial inputs.

%\newcommand{\leIf}[3]{\textbf{if} #1 \textbf{then} #2 \textbf{else} #3}

% \begin{theorem} \label{thm:main-tracking}
\begin{restatable}[Analysis of \txtRCest]{theorem}{trackinganalysis}
\label{thm:main-tracking}
For a value of 
$k = O(\al^{-2} \sqrt r \log^{3/2}(n/\beta))$,
% $k=O(\al^{-2} \log(n/\beta) + \al^{-1} \sqrt r \log^{3/2}(mn/\beta))$, 
suppose that an adaptive adversary provides at most $m$ inputs to \cref{bottomkrobustbaseline:algo} such that the sketch of every input has at most $k/2$ deactivated keys. Then, with probability at least $1-\beta$, for every input whose sketch has at most $\al k / 4$ deactivated keys, the output is a $(1+\al)$-approximation of the true cardinality.
\end{restatable}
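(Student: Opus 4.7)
The plan is to mirror the proof of \cref{basicrobust:thm}, replacing the global participation bound \eqref{limitassumptionsketch:eq} with the per-key tracking mechanism and accounting explicitly for deactivated keys. I would identify \cref{bottomkrobustbaseline:algo} with \cref{algo:svt-individual} applied to $\boldsymbol{\rho}$ with predicate $h_{V,\tau}(i,\rho_i) = \ind{i \in V \land \rho_i < \tau}$ and threshold $T = k/4$: the counters $C[i]$ and the active set $A = \{i : C[i] < r\}$ in $\txtRCest$ correspond directly to those of \cref{algo:svt-individual}, and the total number of queries made to \cref{algo:svt-individual} across the $m$ inputs is $O(m\al^{-1}\log n) = \poly(n)$.

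The central step is an analog of \cref{lemma:sub-h-hat}: couple $\txtRCest$ (whose counting query $\tw h$ sums only over $S$) with the idealized execution using $\hat h = \sum_{i \in V\cap A}\ind{\rho_i < \tau} + \Lap(1/\eps_0)$ with matched noise. These sums differ only when some $i \in V \setminus S$ satisfies $\rho_i < \tau$, which by the bottom-$k$ property forces every $j \in S$ to satisfy $\rho_j < \tau$, so $\tw h \geq |S\cap A| \geq k/2$ by the weak assumption $|S\setminus A|\leq k/2$. Since $T = k/4 < k/2$, both versions pass the threshold test (up to Laplace tails), so the outputs $T/\tau$ coincide. To also keep the \emph{charging} synchronized I would argue inductively that $\{i\in V:\rho_i<\tau_{j'}\}\subseteq S$ at every iteration $j'$ of every input: at any non-exit iteration $j-1$ we have $\tw h_{j-1} < T$, which rules out any $i\in V\setminus S$ with $\rho_i<\tau_{j-1}$ (such a key would force $\tw h_{j-1} \geq k/2$), hence $\mathrm{deact}(\tau_{j-1}) = \sum_{i\in V\setminus A}\ind{\rho_i<\tau_{j-1}} \leq |S\setminus A|\leq k/2$. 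Plugging this into \cref{totalerror:coro} yields $\tau_{j-1}|V| \leq T + \Delta + k/2 \leq 3k/4 + \al k/32$, so $\tau_j|V| = (1+\al/8)\tau_{j-1}|V| < k$; applying the generalization bound of \cref{thm:DP-generalization} to the full-sample count $|V\cap\{\rho<\tau_j\}|$ then shows this count is below $k$ with high probability, preserving the subset property at iteration $j$ and closing the induction.

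With synchronization in place, I would invoke \cref{totalerror:coro} for each input to conclude $\tau|V| - \Delta - \mathrm{deact}(\tau) \leq \hat h \leq \tau|V| + \Delta$ with $\Delta \leq \al k/32$ by the choice of $k$. For a \emph{good} input (sketch with at most $\al k/4$ deactivated keys), the synchronization gives the tighter bound $\mathrm{deact}(\tau_{j-1}) \leq |S\setminus A|\leq \al k/4 = \al T$. Analogs of \cref{prop:low-guarantee} and \cref{prop:high-guarantee} then pin the exit $\tau_j$ to $[(1-O(\al))T/|V|,\,(1+O(\al))T/|V|]$: the lower proposition follows since $\tau < (1-\al)T/|V|$ implies $\hat h < T$ via the upper bound on $\hat h$; the upper proposition follows since $\tw h_{j-1} < T$ together with the geometric update and the bounds on $\Delta$ and $\mathrm{deact}(\tau_{j-1})$ yields $\tau_j|V| \leq (1+\al/8)(T + \Delta + \al T)$. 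After a constant retuning of $\al$ this gives that $T/\tau_j$ is a $(1+\al)$-approximation of $|V|$. A union bound over the coupling-failure event ($\leq \beta/4$, from Laplace tails and the full-sample concentration), the generalization-failure event ($\leq \beta/2$), and the sum-over-$S$ versus sum-over-$V$ noise coupling ($\leq \beta/4$) delivers the overall $1-\beta$ probability bound.

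The main obstacle I expect is the circularity inherent in the coupling induction: the generalization bound needed to propagate the ``$\{i\in V:\rho_i<\tau\}\subseteq S$'' property forward relies on the privacy of an execution of \cref{algo:svt-individual} whose accounting is required to match that of $\txtRCest$. The key to breaking the circularity is that the privacy and generalization of the \emph{idealized} \cref{algo:svt-individual} execution hold unconditionally, and the weak per-sketch hypothesis $|S\setminus A|\leq k/2$ (assumed for every input) supplies the uniform slack needed to propagate the subset property inductively without appealing to any earlier-step guarantee beyond what \cref{totalerror:coro} already provides. A secondary subtlety is that the concentration of the full-sample count $|V\cap\{\rho<\tau\}|$ around $\tau|V|$ must hold uniformly across all iterations of all queries, which is handled by one additional union bound built into the generalization accounting.
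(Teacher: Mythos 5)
Your proposal is correct and follows essentially the same route as the paper's proof. You establish the coupling $\tw h \approx \hat h$ by observing that a discrepancy forces $\tw h \geq |S\cap A| \geq k/2 > T$ (this is exactly the paper's Lemma analogous to \cref{lemma:sub-h-hat}), you then run an induction to show the subset property $\{i\in V : \rho_i<\tau\}\subseteq S$ so that the per-key charging in $\txtRCest$ matches that of \cref{algo:svt-individual}, and you finish by invoking \cref{totalerror:coro} with the deactivation term bounded by $\al k/4$ for good inputs. The only minor divergence is in how the induction is phrased: the paper packages the inductive hypothesis as ``loop continues/terminates at explicit thresholds plus charging matches'' and derives the $\tau|V| < \frac{7}{8}k$ bound from the termination bullet, whereas you derive it from the raw while-loop exit condition $\tw h_{j-1} < T$ together with the inductively-established $\mathrm{deact}(\tau_{j-1})\le k/2$; both yield the same bound and then the same concentration argument for preserving the subset property at $\tau_j$. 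One detail you should make explicit is the base case: for the initial $\tau = k/(2n)$ there is no prior iteration, but $\tau|V| = k|V|/(2n) \le k/2$ gives the needed bound directly, matching how the paper's $\tau < \frac{7}{8}k/|V|$ bound covers it.
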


This theorem guarantees that, as long as no query has too many (more than $k/2$) deactivated keys, the results of \cref{bottomkrobustbaseline:algo} will continue to be accurate even for queries that have a few (at most $\al k/4$) deactivated keys. This allows the algorithm to continue guaranteeing accuracy even if a few keys are subject to many queries. We discuss the numerical advantages of this further in \cref{experiments:sec}.
The proof is analogous to \cref{sec:analysis-basic} and is provided in \cref{trackinganalysis:sec}.

\section{Empirical Demonstration} \label{experiments:sec}

We demonstrate the effectiveness of our fine-grained approach by
comparing the number of queries that can be answered effectively with \txtRCest\ to that of the baseline per-query analysis.
We use synthetically generated query sets sampled from Uniform and Pareto distributions with $\alpha\in \{1.5,2\}$ and $x_m=1$, support size of $10^6$ and query set size of $5\times 10^3$.
For each sketch size $k$, we match a value of the parameter $r= 0.002 k^2$ (with \txtRCest) and respectively $t= 0.002 k^2$ with baseline analysis.
% For noise scale $\eps=5/k$ and bounding the number of  \AboveThreshold tests using target-charging analysis. 

With \txtRCest, we count the number of queries for which at most $10\%$ of sketch entries are deactivated and stop when there is a sketch with $50\%$ of entries deactivated.
Figure~\ref{fig:fine-grained-gain} reports the number of queries with the baseline and \txtRCest\ estimators. The respective gain factor is measured by the ratio of the number of queries that can be effectively answered with per-key analysis to the baseline. We observe gains of two orders of magnitude for uniformly sampled query sets. This hold even without tracking -- using \txtRBCest\ -- where we stop as soon as there is a key that appeared in $r$ queries. For  Pareto query sets, tracking is necessary, as some keys do appear in many query sketches. We observe gains of $12\times$ for the very skewed $\alpha=1.5$ and $ 40\times$ with $\alpha=2$.

%The generalization bounds hold with estimator \RBCest, until $k/2$ keys in total are maxed, or with estimator \RCest, until there is a query sketch with more than $k/2$ keys deactivated. We count the number of queries that can be answered effectively when there are at most $10\%$ of sketch entries maxed or deactivated, respectively.

\section*{Conclusion}

Our work raises several follow-up questions. Our fine-grained robust estimators are specifically designed for the bottom-$k$ cardinality sketch.
We conjecture that it is possible to derive estimators with similar guarantees for other MinHash sketches, including the 
$k$-partition (PCSA -- Stochastic Averaging) cardinality sketches~\cite{FlajoletMartin85,hyperloglog:2007}. The missing piece is that the fine-grained ADA framework lacks the necessary flexibility, and requires an extension beyond plain linear queries. 
Another open question is whether similar results hold for other norms, particularly in scenarios where most inputs are sparse, and only a fraction of entries are 'heavy'—meaning they are nonzero across many inputs. For $\ell_2$ norm estimation with the popular AMS sketch~\cite{ams99}, the answer is negative, as known quadratic-size attacks remain effective even when inputs are sparse with disjoint supports~\cite{CNSS:AAAI2023Tricking}. However, we conjecture that similar results are possible for sublinear statistics, including capping statistics~\cite{CapSampling,CohenGeri:NeurIPS2019},  whose sketches incorporate generalized cardinality sketches, and for (universal or specialized) bottom-$k$ sketches, which are weighted versions of the bottom-$k$ cardinality sketch.
  
% Our fine-grained robust estimators are specifically designed for the bottom-$k$ cardinality sketch. It would be nice to derive such estimators for $k$-partition (stochastic averaging) cardinality sketches such as the popular HyperLogLog~\cite{hyperloglog:2007}. The hurdle is that 
% the fine-grained ADA framework does not have the needed flexibility and we need to extend it beyond plain linear queries.

% Another question is whether we can hope for similar results for other norms, in the situation that most inputs are sparse and at most a fraction of the entries are `heavy'' in the sense that they are nonzero on many inputs.
% For $\ell_2$  norm estimation, with popular sketches~\cite{ams99}, the answer is negative -- known quadratic size attacks apply even with sparse inputs with disjoint supports~\cite{CNSS:AAAI2023Tricking}. The question is open for sublinear statistics, including capping statistics~\cite{CapSampling,CohenGeri:NeurIPS2019} that have sketches based on extensions of cardinality sketches.

% This can be extended with data-dependent privacy analysis on linear queries (e.g. selection, between thresholds) where we only pay for queries that hit the target. 

\newpage

\section*{Acknowledgments}

Edith Cohen was partially supported by Israel Science Foundation (grant 1156/23). 
Uri Stemmer was Partially supported by the Israel Science Foundation (grant 1419/24) and the Blavatnik Family foundation.

\ignore{
\section*{Impact Statement}

This paper presents work whose goal is to advance the field of 
Machine Learning. There are many potential societal consequences 
of our work, none which we feel must be specifically highlighted here.
}

\bibliographystyle{icml2025}
% \bibliography{main,references,robustHH}

\newpage
\appendix
\onecolumn

\section{Fine-grained SVT privacy bounds}
More precise bounds for Theorem~\ref{algo:svt-individual} and for standard SVT throught the target charging technique (TCT).
\begin{theorem}[Privacy of Target-Charging \cite{targetcharging:ICML2023} ]\label{thm:TCprivacy}
Algorithm~\ref{algo:svt-individual} (and per-query SVT) satisfy the following approximate DP privacy bounds:
\begin{align*}
&\left( (1+\alpha)\frac{r}{q}\eps, \delta^*(r,\alpha)\right) , & \text{for any $\alpha>0$;}\\
&\left( \frac{1}{2}(1+\alpha)\frac{r}{q} \eps^2  + \eps \sqrt{(1+\alpha)\frac{r}{q} \log(1/\delta)}, \delta + \delta^*(r,\alpha) \right), & \text{for any $\delta>0$, $\alpha>0$.}
\end{align*}
where $\delta^*(r,\alpha) \leq e^{-\frac{\alpha^2}{2(1+\alpha)} r}$ and $q=\frac{1}{e^\eps+1}$.
\end{theorem}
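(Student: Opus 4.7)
The plan is to bound the privacy loss of \cref{algo:svt-individual} on two neighboring datasets $\bsx, \bsx'$ differing only at a single index $i^*$. First I would couple the two executions using the same Laplace noise samples, so that the transcripts diverge only after the first query $h$ with $h(i^*, x_{i^*}) \ne h(i^*, x'_{i^*})$ is evaluated and returns a non-$\perp$ answer while $i^*$ is still active. The key structural observation is that charging only increments $C_{i^*}$ on above-threshold (positive) queries for which $h(i^*, x_{i^*}) = 1$, and once $C_{i^*}$ reaches $r$ the index is removed from $A$ and can never again affect the output. Hence, in either execution, the number of queries that can contribute privacy loss at $i^*$ is at most the number of positive, $i^*$-touching queries until $i^*$ is deactivated, which is at most $r$.

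Second, I would translate this per-index charge cap into a cap on the total number of $i^*$-touching queries using a Chernoff/concentration step. For a query on which $h(i^*, x_{i^*})=1$, a standard Laplace worst-case calculation shows that the probability of an above-threshold outcome (and hence of a charge to $i^*$) is at least $q = 1/(e^\eps + 1)$, uniformly over the other inputs. Thus, after roughly $(1+\alpha)r/q$ such queries we expect to have accumulated at least $r$ charges with high probability. Handling the adaptivity of the query stream here requires a martingale/optional-stopping argument (or equivalently a Rényi filter) rather than a naive Chernoff bound: the negative binomial random variable counting $i^*$-touching queries until $r$ charges is bounded by $(1+\alpha)r/q$ except on an event of probability at most $\delta^*(r,\alpha) \le e^{-\frac{\alpha^2}{2(1+\alpha)}r}$. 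Setting aside that bad event produces the additive $\delta^*(r,\alpha)$ term in both bounds.

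Third, I would invoke composition of the Laplace-style $\eps$-DP cost per $i^*$-touching query. Each such query releases $\hat h = (\sum_{i\in A} h(i,x_i)) + \Lap(1/\eps)$, which contributes at most $\eps$ privacy loss per step. Conditioned on the good event from step two, at most $(1+\alpha)r/q$ nontrivial releases happen at $i^*$, so basic composition gives the pure-DP bound $((1+\alpha)(r/q)\eps, \delta^*(r,\alpha))$. Replacing basic with advanced (or Rényi) composition over the same $(1+\alpha)r/q$ steps yields the quadratic/square-root form $\tfrac{1}{2}(1+\alpha)(r/q)\eps^2 + \eps\sqrt{(1+\alpha)(r/q)\log(1/\delta)}$ with extra $\delta$ slack. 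The ``per-query SVT'' variant is handled identically, since the charging rule is the same once we view each SVT query as a single interaction.

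The main obstacle I expect is justifying the Chernoff-type bound on the number of $i^*$-touching queries in an \emph{adaptive} model, because both the choice of subsequent queries and the active set $A$ depend on prior noisy outputs; a clean treatment requires constructing a supermartingale whose increments dominate independent $\mathrm{Bernoulli}(q)$ draws, exploiting the fact that the Laplace noise is drawn freshly and independently of the query history at each step. A secondary subtlety is verifying that the lower bound $q=1/(e^\eps+1)$ on the positive-outcome probability actually holds uniformly over adversarial query values and thresholds; this follows from a direct computation on the Laplace CDF but needs to be stated carefully so it can be plugged into the martingale argument without circularity.
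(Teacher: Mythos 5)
First, note that the paper does not actually prove \cref{thm:TCprivacy}: it is imported verbatim from \cite{targetcharging:ICML2023}, so there is no in-paper proof to compare against. Judged on its own merits, your sketch has the right high-level shape (per-index charge cap of $r$, a factor $1/q$ converting charges into privacy-loss events, concentration giving $\delta^*(r,\alpha)$, then basic/advanced composition), but the central quantitative step is wrong as stated. You claim that for any query with $h(i^*,x_{i^*})=1$, the probability of an above-threshold outcome is at least $q=1/(e^\eps+1)$, ``uniformly over the other inputs,'' by a direct Laplace CDF computation. This is false: if the threshold $T$ exceeds the true count by much more than $1/\eps$, the positive-outcome probability is exponentially small, far below $1/(e^\eps+1)\approx 1/2$. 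Consequently the number of $i^*$-touching queries before $i^*$ accumulates $r$ charges is \emph{not} stochastically dominated by a negative binomial with success probability $q$; an adversary can issue arbitrarily many $i^*$-touching queries that all return $\perp$, none of which is charged, yet each of which leaks some information about $x_{i^*}$ (the probability of $\perp$ itself depends on $x_{i^*}$). Your step three then composes over only $(1+\alpha)r/q$ ``releases'' and silently ignores this unbounded tail of uncharged touching queries, so the argument does not close.

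The target-charging technique repairs exactly this point with a conditional, not unconditional, statement. For each interaction one decomposes the two output distributions on neighboring inputs as $M(x)=(1-p)C+pA$ and $M(x')=(1-p)C+pB$, where $C$ is a maximal common component; privacy loss is incurred only on draws from the differing components $A,B$, and the claim is that \emph{conditioned on drawing from the differing component}, the target (a non-$\perp$ answer) is hit with probability at least $q=1/(e^\eps+1)$. Under this coupling, every privacy-loss event independently lands in the target with probability at least $q$, so a martingale/stochastic-domination argument bounds the number of loss events by $(1+\alpha)r/q$ up to the failure probability $\delta^*(r,\alpha)\le e^{-\frac{\alpha^2}{2(1+\alpha)}r}$, after which basic and advanced composition give the two stated bounds. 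Arbitrarily many below-threshold touching queries are harmless because they are overwhelmingly drawn from the common component $C$ and contribute no loss. If you want to write a self-contained proof, you must establish this mixture decomposition and the conditional $q$-target property for the Laplace \texttt{AboveThreshold} test; your unconditional version cannot be salvaged. (A smaller inaccuracy: under a shared-noise coupling the two transcripts can already diverge at the first $i^*$-touching query with a \emph{negative} outcome, since the unit shift in $\hat h$ can flip the threshold test.)
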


\section{Proof of extended generalization} \label{genproof:sec}

Here we will prove \cref{thm:DP-gen-mod}:

\dpgen*

The proof is obtained by transforming the following {\em expectation bound} into a {\em high probability bound}. 

\begin{lemma}[Expectation bound {\cite{kontorovich2022adaptive}}]\label{lem:MKLCondExpSQ}
    Let $\Bb$ be an $(\eps,\delta)$-differentially private algorithm that operates on $T$ sub-databases and outputs a predicate $h:X\rightarrow[0,1]$ and an index $t\in\{1,2,\dots,T\}$.
Let $\Dd=D_1\times\cdots D_n$ be a product distribution over $X^n$ be a distribution over $X$, let $\vec{\bsx}=(\bsx_1,\dots,\bsx_T)$ where every $\bsx_j \sim \Dd$ is sampled independently, and let $(h,t)\leftarrow \Bb\left(\vec{\bsx}\right)$.
Then,
$$
\E_{\substack{\vec{\bsx}\sim\Dd \\ (h,t)\leftarrow \Bb\left(\vec{\bsx}\right)}}\Big[ e^{-\eps} \cdot h(\Dd) \Big] - Tn\delta \;\;\;
\leq
\E_{\substack{\vec{\bsx}\sim\Dd \\ (h,t)\leftarrow \Bb\left(\vec{\bsx}\right)}}\Big[ h(\bsx_t) \Big] \;\;\;
\leq \E_{\substack{\vec{\bsx}\sim\Dd \\ (h,t)\leftarrow \Bb\left(\vec{\bsx}\right)}}\Big[ e^{\eps} \cdot h(\Dd) \Big] + Tn\delta.
$$
\end{lemma}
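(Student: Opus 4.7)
My plan is to convert the expectation bound of Lemma~\ref{lem:MKLCondExpSQ} into the claimed high-probability tail bound via the standard ``monitor'' reduction from the adaptive data analysis literature. I would prove only the lower-tail statement $\Pr[e^{-2\eps}\E_{\boldsymbol{y}\sim\Dd} h(\boldsymbol{y}) - h(\boldsymbol{x}) > \Delta] < 1/T$ with $\Delta := \frac{4}{\eps}\log(T+1) + 2Tn\delta$; the matching upper-tail bound follows by the symmetric argument applied to $1-h$ (which is also $\{0,1\}$-valued and can be produced by the same $\mathcal{A}$ via post-processing).

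Toward contradiction, suppose $p := \Pr_{\boldsymbol{x}\sim\Dd,\, h\gets \mathcal{A}(\boldsymbol{x})}[e^{-2\eps}\E_{\boldsymbol{y}} h(\boldsymbol{y}) - h(\boldsymbol{x}) > \Delta] \geq 1/T$. I would build a monitor meta-algorithm $\Bb$ that takes $T$ independent sub-databases $\vec{\boldsymbol{x}} = (\boldsymbol{x}_1,\ldots,\boldsymbol{x}_T) \sim \Dd^T$, runs $\mathcal{A}$ independently on each to obtain $h_i := \mathcal{A}(\boldsymbol{x}_i)$, and outputs $(h_t, t)$ with $t := \arg\max_i v_i$, where $v_i := e^{-2\eps}\E_{\boldsymbol{y}\sim\Dd} h_i(\boldsymbol{y}) - h_i(\boldsymbol{x}_i)$. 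Because the $T$ sub-databases are disjoint, parallel composition of $\mathcal{A}$'s $(\eps,\delta)$-DP guarantee yields that $\Bb$ is itself $(\eps,\delta)$-DP in $\vec{\boldsymbol{x}}$. To keep the selection step clean as post-processing, I would have each invocation of $\mathcal{A}$ emit the (input-dependent) value $h_i(\boldsymbol{x}_i)$ alongside $h_i$, which is legal since $\mathcal{A}$ already has $\boldsymbol{x}_i$.

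Applying Lemma~\ref{lem:MKLCondExpSQ} to $\Bb$ and using $\E[\E_{\boldsymbol{y}} h(\boldsymbol{y})] \geq 0$ yields $\E[v_t] \leq (e^{-2\eps}-e^{-\eps})\E[\E_{\boldsymbol{y}} h(\boldsymbol{y})] + Tn\delta \leq Tn\delta$. On the other hand, per-copy failure probability $p \geq 1/T$ together with independence across the $T$ copies gives $\Pr[\max_i v_i > \Delta] \geq 1 - (1-1/T)^T \geq 1 - e^{-1}$, and combined with the trivial bound $v_i \geq -n$ this implies $\E[v_t] \geq (1-e^{-1})\Delta - e^{-1} n$. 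For $\Delta$ sufficiently large these two estimates of $\E[v_t]$ cannot coexist, producing the desired contradiction.

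The main obstacle is quantitative: the naive max step only forces $\Delta = \Omega(n)$, whereas the stated $\Delta = \frac{4}{\eps}\log(T+1) + 2Tn\delta$ is much sharper. Closing the gap requires upgrading the crude max argument to an exponential-moment bound, i.e.\ applying the expectation bound to $\exp(\eps v/4)$ rather than to $v$ itself so that $\Pr[v > \Delta]$ decays like $e^{-\eps\Delta/4}$; this is the source of the $\frac{4}{\eps}\log(T+1)$ term, and it is also where the factor of $2$ in $2Tn\delta$ appears, arising from the composition of the $Tn\delta$ slack with the exponential-moment step and a union bound over the two tails.
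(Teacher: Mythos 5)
There is a fundamental mismatch here: you have not proven the stated lemma, you have assumed it. The statement to be established is the expectation bound itself (Lemma~\ref{lem:MKLCondExpSQ}), yet your argument's central step is ``Applying Lemma~\ref{lem:MKLCondExpSQ} to $\Bb$\dots'', which is circular. What you have actually sketched is the \emph{other} direction of the paper's appendix, namely the monitor reduction that converts the expectation bound into the high-probability statement of Theorem~\ref{thm:DP-gen-mod}. Even read as a proof of that theorem, your version diverges from the paper's and does not close: the paper's monitor selects $(h^*,t^*)$ from the candidate set via the \emph{exponential mechanism} (with score $e^{-2\eps}h(\Dd)-h(\bsx_t)$ and a padding candidate $h^0\equiv 0$), which is exactly what produces the $\frac{2}{\eps}\log(T+1)$ utility loss and keeps the whole monitor $(2\eps,\delta)$-DP; your argmax selection is not a DP post-processing of the per-copy outputs, and you concede that patching it with an exponential-moment argument is left open.

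For the lemma itself, the paper defers to Lemma~3.1 of \cite{kontorovich2022adaptive} (with $\psi=0$), and the argument there is a direct, coordinate-wise use of the privacy guarantee rather than any monitor construction. Schematically: fix a coordinate $i\in[n]$ of the selected sub-database and compare the real entry $x_{t,i}$ with a fresh resample $y_i\sim D_i$; since $\Bb$ is $(\eps,\delta)$-DP with respect to changing that single entry, the joint distribution of (output, $i$-th entry of $\bsx_t$) is $(\eps,\delta)$-close to the distribution in which that entry is independent of the output, giving $\E[h(x_{t,i})]\le e^{\eps}\,\E_{y_i\sim D_i}[h(y_i)]+\delta$ after accounting for the $T$ possible values of the selected index $t$. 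Summing over the $n$ coordinates yields $\E[h(\bsx_t)]\le e^{\eps}\,\E[h(\Dd)]+Tn\delta$, and the symmetric argument gives the lower bound. That coordinate-wise stability argument is the actual content of the lemma and is entirely absent from your proposal.
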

(Here, we use $h(\Dd)$ as shorthand to denote $\E_{\bsy \sim \Dd} h(\bsy)$.)
\begin{proof}
The proof is identical to that of Lemma 3.1 of \cite{kontorovich2022adaptive} with $\psi = 0$, and omitting the final inequality in the last chain of inequalities.
\end{proof}

\begin{proof}[Proof of Theorem~\ref{thm:DP-gen-mod}]
We prove the first inequality; the second follows from similar arguments. 
Fix a product distribution $\Dd$ on $X$. Assume towards contradiction that with probability at least $1/T$ algorithm $\Aa$ outputs a predicate $h$ such that \ 
$e^{-2\eps} \cdot h(\Dd)-h(\bsx)> \frac{4}{\eps}\log(T+1) + 2Tn\delta$.
We now use $\Aa$ and $\Dd$ to construct the following algorithm $\Bb$ that contradicts Lemma~\ref{lem:MKLCondExpSQ}. We remark that algorithm $\Bb$ ``knows'' the distribution $\Dd$. This will still lead to a contradiction because the expectation bound of Lemma~\ref{lem:MKLCondExpSQ} holds for {\em every} differentially private algorithm and {\em every} underlying distribution.

\begin{algorithm2e}[htbp]
\caption{$\Bb$}\addcontentsline{lof}{figure}{Algorithm $\Bb$}
\DontPrintSemicolon
\KwIn{$T$ databases of size $n$ each: $\vec{\bsx}=(\bsx_1,\dots,\bsx_T)$}%, where $T\triangleq\left\lfloor \eps/\delta \right\rfloor$.

Define $h^0\equiv 0$ and set $F \ot \{(h^0,1)\}$.

\For{$t=1,...,T$}{
Let $h_t \leftarrow \Aa(\bsx_t)$, and set $F=F\cup\left\{\left(h_t,t\right)\right\}$}

Sample $(h^*,t^*)$ from $F$ with probability proportional to $\exp\left(\frac{\eps}{2} \left(e^{-2\eps} \cdot h^*(\Dd)-h^*(\bsx_{t^*})\right)\right)$.

\Return{$(h^*, t^*).$}
\end{algorithm2e}

Observe that $\Bb$ only accesses its input through $\Aa$ (which is $(\eps,\delta)$-differentially private) and the exponential mechanism (which is $(\eps,0)$-differentially private). Thus, by composition and post-processing, $\Bb$ is $(2\eps,\delta)$-differentially private. 
Now consider applying $\Bb$ on databases $\vec{\bsx} = (\bsx_1,\dots,\bsx_T)$ containing i.i.d.\ samples from $\Dd$. By our assumption on $\Aa$, for every $t$ we have that 
$e^{-2\eps} \cdot h_t(\Dd)-h_t(\bsx_t)\geq  \frac{4}{\eps} \log(T+1) + 2Tn\delta$
 with probability at least $1/T$. We therefore get
$$\Pr_{\substack{\vec{\bsx}\sim\Dd \\ \Bb\left(\vec{\bsx}\right)}}\left[{\max_{t \in [T]} \left\{ e^{-2\eps} \cdot h_t(\Dd)-h_t(\bsx_t) \right\} \geq \frac{4}{\eps} \log(T+1) + 2Tn\delta  }\right] \geq 1 - \left( 1 - 1/T \right)^T \geq \frac12.$$
The probability is taken over the random choice of
the examples in $\vec{\bsx}$ according to $\Dd$ and the generation of the predicates $h_t$ according to $\Bb(\vec{\bsx})$.
Thus, by Markov's inequality,
$$
\E_{\substack{\vec{\bsx}\sim\Dd \\ \Bb\left(\vec{\bsx}\right)}}\left[{\max\{0,\max_{t \in [T]}  \left\{ e^{-2\eps} \cdot h_t(\Dd)-h_t(\bsx_t) \right\} }\right] \geq \frac{2}{\eps} \log(T+1) + Tn\delta.
$$
Recall that the set $F$ (constructed in step~2 of algorithm $\Bb$) contains the predicate $h^0\equiv0$, and hence,
\begin{equation}\label{eq:LargeError}
\E_{\substack{\vec{\bsx}\sim\Dd \\ \Bb\left(\vec{\bsx}\right)}}\left[\max_{(h,t) \in F} \left\{ e^{-2\eps} \cdot h_t(\Dd)-h_t(\bsx_t) \right\}\right] =\E_{\substack{\vec{\bsx}\sim\Dd \\ \Bb\left(\vec{\bsx}\right)}}\left[{\max\{0,\max_{t \in [T]}  \left\{ e^{-2\eps} \cdot h_t(\Dd)-h_t(\bsx_t) \right\} }\right] \geq \frac{2}{\eps} \log(T+1) + Tn\delta.
\end{equation}

So, in expectation, the set $F$ contains a pair $(h,t)$ with large difference $e^{-2\eps} \cdot h(\Dd)-h(\bsx_t)$. In order to contradict the expectation bound of Lemma~\ref{lem:MKLCondExpSQ}, we need to show that this is also the case for the pair $(h^*,t^*)$ that is sampled in Step~3. Indeed, by the properties of the exponential mechanism, we have that
\begin{equation}
\E_{(h^*,t^*)\in_R F}\Big[ e^{-2\eps} \cdot h^*(\Dd)- h^*(\bsx_{t^*})   \Big] \geq \max_{(h,t)\in F} \{ e^{-2\eps} \cdot h(\Dd) - h(\bsx_{t}) \} - \frac{2}{\eps} \log(T+1). \label{eq:Utility}
\end{equation}
Taking the expectation also over $\vec{\bsx}\sim\Dd$ and $\Bb(\vec{\bsx})$ we get that
\begin{eqnarray*}
\E_{\substack{\vec{\bsx}\sim\Dd \\ \Bb\left(\vec{\bsx}\right)}}\Big[  e^{-2\eps} \cdot h^*(\Dd) - h^*(\bsx_{t^*})  \Big] 
&\geq& \E_{\substack{\vec{\bsx}\sim\Dd \\ \Bb\left(\vec{\bsx}\right)}}\Big[\max_{(h,t)\in F} \{ e^{-2\eps} \cdot h(\Dd) - h(\bsx_{t}) \} \Big] - \frac{2}{\eps} \log(T+1)\\
&\geq& \frac{2}{\eps} \log(T+1) + Tn\delta - \frac{2}{\eps} \log(T+1) = Tn\delta.
\end{eqnarray*}
This contradicts Lemma~\ref{lem:MKLCondExpSQ}.
\end{proof}

% \ignore{
% \begin{theorem}[Generalization property of DP \cite{DworkFHPRR15,BassilyNSSSU:sicomp2021,FeldmanS17}] \label{thm:DP-generalization}
% Let $\mathcal{A}:X^n \to 2^{X}$ be an $(\eps, \delta)$-differentially private algorithm that operates on a dataset of size $n$ and outputs a predicate $h:X\to \{0,1\}$. Let $\Dd$ be a distribution over $X$, let $\boldsymbol{x}=(x_1,\ldots,x_n) \sim \Dd^n$ be i.i.d.\ samples from $\Dd$, and let $h\gets \mathcal{A}(\boldsymbol{x})$. Then for any $T\ge 1$ it holds that \eccomment{Other direction holds as well. Add after we converge on what we need and in what form.}
% {\small
% \[
% \Pr_{\boldsymbol{x}\sim \Dd^d,\atop h\gets \mathcal{A}(\boldsymbol{x})}\left[ e^{-2\eps} \E_{y\sim \Dd} h(y) - \frac{1}{n} \sum_{i\in [n]} h(x_i )> \frac{4}{\eps n}\log(T+1) + 2T\delta \right] < \frac{1}{T}.
% \]}
% \end{theorem}
% }

\section{Analysis of the tracking estimator} \label{trackinganalysis:sec}

Here we will prove \cref{thm:main-tracking}:

\trackinganalysis*

The analysis is analogous to \cref{sec:analysis-basic}; indeed, we will reuse most of the results from that section. We may make the same assumptions on $k, r, \al$ as at the start of \cref{sec:analysis-basic}. Again, we begin with an analog of \cref{lemma:sub-h-hat}:

\begin{lemma} \label{lemma:sub-h-hat-tracking}
If $\tw h = \sum_{(i,\rho_i)\in S} \ind{(\rho_i < \tau)\land (C[i]<r)}+\Lap(1/\eps_0)$ is replaced by $\hat h = \sum_{i \in V} \ind{(\rho_i < \tau)\land (C[i]<r)}+\Lap(1/\eps_0)$, then the outputs of \cref{bottomkrobustbaseline:algo} change with probability at most $\beta/4$.
\end{lemma}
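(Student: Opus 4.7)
The plan is to mimic the argument of \cref{lemma:sub-h-hat}, but with a twist that uses the new hypothesis from \cref{thm:main-tracking} that every query's sketch contains at most $k/2$ deactivated keys. Since the Laplace noises in $\widetilde h$ and $\widehat h$ are coupled, and the sum defining $\widehat h$ is over a superset of the sum defining $\widetilde h$, we always have $\widehat h \geq \widetilde h$. Hence the only way the two algorithms can produce different outputs on a given while-loop iteration is if $\widetilde h < T$ while $\widehat h \geq T$, with $T = k/4$.

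Next I would argue that whenever $\widetilde h \neq \widehat h$, the true (pre-noise) value of $\widetilde h$ is already quite large. Indeed, $\widehat h > \widetilde h$ requires the existence of some $i \in V \setminus S$ with $\rho_i < \tau$ and $C[i] < r$. But $S$ consists of the $k$ keys in $V$ with smallest $\rho_i$, so if any $i \in V\setminus S$ has $\rho_i < \tau$, then every key in $S$ has $\rho_i < \tau$ as well. By the assumption of \cref{thm:main-tracking}, at most $k/2$ of the keys in $S$ are deactivated, so at least $k/2$ keys of $S$ satisfy both $\rho_i < \tau$ and $C[i] < r$. Therefore the noiseless part of $\widetilde h$ is at least $k/2$, so $\widetilde h < T = k/4$ forces the Laplace noise to be at most $-k/4$.

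Finally, I would bound this bad event using the Laplace tail: the probability that $\Lap(1/\eps_0) \le -k/4$ is at most $e^{-\eps_0 k/4}$, which, by the choice $k = \Theta(\al^{-2}\sqrt{r}\log^{3/2}(n/\beta))$ and $\eps_0 = \Theta(\al/\sqrt{r\log(n/\beta)})$, is smaller than any fixed inverse polynomial in $n$. As in \cref{lemma:sub-h-hat}, the total number of adaptive queries is bounded by $nr \le \poly(n)$ (from \eqref{limitassumptionsketch:eq}), and each invocation of \cref{bottomkrobustbaseline:algo} runs the while loop at most $O(\al^{-1}\log n) = \poly(n)$ times. A union bound over all while-loop iterations across all queries then bounds the total failure probability by $\beta/4$, as required.

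The main (and only real) obstacle is the observation in the second paragraph that the at-most-$k/2$-deactivated assumption guarantees a large pre-noise value for $\widetilde h$ at the moment when $\widehat h$ could possibly exceed it; the rest is a routine Laplace tail plus union bound, essentially identical to the basic case. Note that the stronger hypothesis $k/2$ (versus the $\al k/4$ one used later to control the output quality) is exactly what is needed to make this coupling argument go through at the gap between the true sum and the threshold $T=k/4$.
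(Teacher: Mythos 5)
Your proof is correct and follows essentially the same approach as the paper: observe that a discrepancy between $\tw h$ and $\hat h$ forces all $\rho$-values in $S$ below $\tau$, use the at-most-$k/2$-deactivated hypothesis to lower-bound the noiseless $\tw h$ by $k/2$, apply the Laplace tail against the gap to $T=k/4$, and union-bound over all while-loop iterations. The only cosmetic difference is that you make the monotonicity $\hat h \ge \tw h$ explicit, which the paper leaves implicit.
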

\begin{proof}
In order for $\tw h$ not to equal $\hat h$, the maximum value of $\rho_i$ in $S$ must be below $\tau$. However, in that case, by assumption, at most $k/2$ of these elements may be inactive, so the sum in $\tw h$ is at least $k/2$. In order for the output to change in any given step, we must then have $\tw h < T < \hat h$. However, this would require $\Lap(1/\e_0) < -k/4$, which has probability at most $e^{-\e_0 k/4} < m/\beta$. By a union bound (as in the proof of \cref{lemma:sub-h-hat}), we are done.
\end{proof}

Again, we assume henceforth that \cref{bottomkrobustbaseline:algo} uses $\hat h$ instead of $\tw h$. We now show once again that \cref{bottomkrobustbaseline:algo} can be simulated by calls to \cref{algo:svt-individual}, with the same function $h_{V, \tau}$ as in \cref{sec:analysis-basic}. Indeed, the only difference from \cref{algo:svt-individual} is that $C$ can only increment elements of the sketch $S$ rather than the whole set $V$, so we need to ensure that there are never keys $i \in V$ such that $\rho_i < \tau$ and $i \notin S$. We show this, along with the analogs of \cref{prop:low-guarantee} and \cref{prop:high-guarantee}, by induction:

\begin{claim} \label{claim:consistent-induct}
The following holds with probability at least $1-\beta/2$. Let $d$ be the number of deactivated elements in the sketch $S$. Then, whenever $\tau < (1 - \al/4) T/|V|$, the while loop in \cref{bottomkrobustbaseline:algo} continues to the next value of $\tau$, and whenever $\tau > ((1 + \al/4) T + d) / |V|$, it terminates. Moreover, the values in $C$ always match the values that \cref{algo:svt-individual} would have.
\end{claim}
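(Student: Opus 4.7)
The plan is to prove \cref{claim:consistent-induct} by induction on the query number $j$, with joint inductive invariant that (i) the dictionary $C$ in \cref{bottomkrobustbaseline:algo} agrees with the charge counters that \cref{algo:svt-individual} would maintain on the induced SVT stream using the predicates $h_{V,\tau}(i,\rho_i)=\ind{i\in V \wedge \rho_i<\tau}$, and (ii) every processed query satisfied the two termination bounds of the claim. I would work throughout under a single good event $E$ of probability at least $1-\beta/2$, obtained as the intersection of the event of \cref{lemma:sub-h-hat-tracking} (identifying the sketch-restricted $\widetilde{h}$ with the full $V$-sum $\hat h$, probability $\ge 1-\beta/4$) and the event of \cref{totalerror:coro}, applied with parameter $\alpha/16$ and failure probability $\beta/4$ to the SVT sequence on the sample $\boldsymbol{\rho}$ driven by these same predicates (probability $\ge 1-\beta/4$).

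For the inductive step, invariant (i) gives $\{i:C[i]<r\}=A$ at the start of query $j$, so the algorithm's noisy count $\hat h$ at each tested $\tau$ coincides exactly with what SVT would return. \cref{totalerror:coro} then yields
\begin{equation}
\tau|V|-\Delta-d^* \;\le\; \hat h \;\le\; \tau|V|+\Delta,
\end{equation}
where $d^*\coloneqq\sum_{i\in V\setminus A}\ind{\rho_i<\tau}$ and $\Delta=(\alpha/16)\tau|V|+O(\alpha^{-1}\sqrt r\log^{3/2}(n/\beta))$. Choosing the hidden constant in $k=\Theta(\alpha^{-2}\sqrt r\log^{3/2}(n/\beta))$ large enough keeps the purely additive portion of $\Delta$ (together with the Laplace tail) below $\alpha T/32$. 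Substituting $\tau<(1-\alpha/4)T/|V|$ into the upper bound gives $\hat h<T$, so the loop continues; substituting $\tau>((1+\alpha/4)T+d)/|V|$ into the lower bound, together with the sub-claim $d^*\le d$ treated below, gives $\hat h>T$, so the loop terminates. Both steps are routine algebraic manipulations exactly analogous to \cref{prop:low-guarantee} and \cref{prop:high-guarantee}, the only difference being the extra $d$ summand in the upper threshold which is precisely what is needed to absorb $d^*$.

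The main obstacle is preserving invariant (i) through the per-query update to $C$: the algorithm increments over $\{i\in S:\rho_i<\tau\}$, while SVT would increment over $\{i\in V\cap A:\rho_i<\tau\}$. The two agree, and the auxiliary inequality $d^*\le d$ also holds, as soon as $\{i\in V:\rho_i<\tau\}\subseteq S$, i.e., as soon as fewer than $k$ keys of $V$ satisfy $\rho_i<\tau$. To establish this containment inside the induction, I would combine the upper bound on $\hat h$ at the previous non-terminating iteration $\tau'=\tau/(1+\alpha/8)$ with invariant (i) and the hypothesis $d\le k/2$ to deduce $\tau|V|\le(1+O(\alpha))k$, and then fold into $E$ a concentration statement for the unrestricted Bernoulli sum $\sum_{i\in V}\ind{\rho_i<\tau}$ about its expectation $\tau|V|$ (obtainable either by invoking the raw DP generalization of \cref{thm:DP-generalization} on the very same SVT output predicates, or via a direct Chernoff bound made uniform through the DP-based union bound). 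This upgrades the expected-count estimate to an actual-count bound with the required probability, and iterating through the while loop and chaining across queries closes the induction.
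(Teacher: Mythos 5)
Your proof follows essentially the same inductive structure as the paper's: you maintain the joint invariant that $C$ agrees with the SVT counters and all prior iterations satisfied the threshold bounds, identify that the crux is establishing $\{i \in V : \rho_i < \tau\} \subseteq S$, argue this via a bound on $\tau|V|$ plus a DP-generalization-based concentration bound on the unrestricted count $h_{V,\tau}(\boldsymbol{\rho})$, and then close the loop via the two-sided bound from \cref{totalerror:coro}. This matches the paper's argument step for step, and your placement of the deactivation term $d^*$ in the \emph{lower} bound on $\hat h$ (rather than the upper) is in fact the direction consistent with \cref{totalerror:coro}, since deactivated keys can only depress $\hat h$ below its expectation.

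One imprecision worth flagging: you state that the inductive hypothesis yields $\tau|V| \le (1+O(\alpha))k$, but this bound is too loose to drive the containment step. If $\tau|V|$ is only guaranteed to be $\approx k$, then the concentration bound gives $h_{V,\tau}(\boldsymbol{\rho}) \approx (1+O(\alpha))k$, which is not provably $< k$, so you cannot conclude that all relevant keys lie in the bottom-$k$ sketch. The argument requires $\tau|V|$ to be bounded \emph{strictly away} from $k$ by a constant factor. This is precisely why the tracking estimator sets $T = k/4$ (unlike the basic estimator's $T = (1-\alpha)k$): together with the hypothesis $d \le k/2$, the non-termination of the previous iteration at $\tau' = \tau/(1+\alpha/8)$ yields $\tau|V| \le (1+O(\alpha))(T + d) \le (1+O(\alpha)) \cdot 3k/4 < 7k/8$, leaving the slack that makes $h_{V,\tau}(\boldsymbol{\rho}) < k$ hold. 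Your method, executed carefully with the actual value of $T$ from \cref{bottomkrobustbaseline:algo}, gives exactly this; the stated $(1+O(\alpha))k$ should be tightened to $(\tfrac34 + O(\alpha))k$ or similar.
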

\begin{proof}
We proceed by induction; suppose the statement has held true on all previous inputs and iterations of the while loop. We show that it holds on the current iteration --- note that we must have $\tau < (1 + 3\al/8) (T + d) / |V|$ by the inductive hypothesis, since otherwise the loop would have terminated in the previous step. Recall by assumption that $d < k/2$, and we set $T = k/4$, so this means that $\tau < \f78 \cdot k/|V|$.

We first show that on the current input, the keys $i \in V$ with $\rho_i < \tau$ are all in the sketch $S$. Indeed, by the inductive hypothesis, we may apply \cref{generror:lemma} on $h_{V, \tau}$:

\begin{align*}
    |\tau |V| - h_{V, \tau}(\bsr)| &< \f{\al}{32} \cdot \tau |V| + O(\log(m/\beta)/\al) \\
    &< \f{\al}{32} \cdot \tau |V| + \f{\al k}{8},
\end{align*}

Since $\tau |V| < 7k/8$, this means that we have $h_{V, \tau}(\bsr) < k$. However, $h_{V, \tau}(\bsr)$ is just the count of $i \in V$ such that $\rho_i < \tau$, so if this count is less than $k$, then all such $i \in V$ are included in the sketch (since it is a bottom-$k$ sketch).

Therefore, we have shown the second part of \cref{claim:consistent-induct}, since every value that would need to be incremented is actually in the sketch $C$. It remains to show the first part.

We now apply \cref{totalerror:coro} (again using the inductive hypothesis that our algorithm has matched \cref{algo:svt-individual}), to obtain that
\begin{gather}
\hat h < (1 + \al/16) \tau |V| + \al k / 16 + d, \label{eq:hat-ub-2} \\
\hat h > (1 - \al/16) \tau |V| - \al k / 16, \label{eq:hat-lb-2}
\end{gather}
where again, the value of $\D$ is at most $\al k / 16$ by the choice of $k$, and the sum in \cref{totalerror:coro} is bounded by the number of deactivated elements satisfying $h_{V,\tau}(i, \rho_i)=1$, which is at most $d$ (since we just showed that all such elements are in $S$). The remainder of this proof is now identical to that of \cref{prop:low-guarantee} and \cref{prop:high-guarantee}.
\end{proof}

From \cref{claim:consistent-induct}, we deduce (identically to the previous analysis) that whenever $d < \al k / 4$, the output of the algorithm is a $(1+\al)$-approximation of $|V|$.

\end{document}